\newtheorem{theorem}{Theorem}
\newtheorem{theoremSA}{Theorem}[section]
\newtheorem{assump*}{Assumption}[section]
\newtheorem{lemma}{Lemma}
\theoremstyle{definition}
\newtheorem{remark}{Remark}
\def\equationautorefname~#1\null{(#1)\null} % for autoref to work as eqref
\def\Snospace~{\S{}}
\def\thm@space@setup{
  \thm@preskip=15pt \thm@postskip=15pt % controls spacing before and
                                % after theorems
}
\newcommand{\var}{\text{Var}}
\newcommand{\E}{{\bf E}}
\newcommand{\R}{\mathbb{R}}
\newcommand{\Sv}{\bm{S}}
\newcommand{\prob}{\mathbf{P}}
\newcommand{\plimarrow}{\stackrel{p}\longrightarrow}
\newcommand{\dlimarrow}{\stackrel{d}\longrightarrow}
\newcommand{\ind}{\mathbf{1}}
\newcommand*{\medcap}{\mathbin{\scalebox{1.5}{\ensuremath{\cap}}}}
\providecommand{\abs}[1]{\lvert#1\rvert} 
\providecommand{\norm}[1]{\lVert#1\rVert}
\renewcommand{\qed}{\hfill \mbox{\raggedright \rule{0.08in}{0.08in}}} % black QED box
\renewenvironment{proof}[1][\proofname]{{\noindent\sc#1. }}{\qed\vspace{15pt}} % "proof" small caps
\title{\bf\sc Dependence-Robust Inference Using Resampled Statistics\thanks{I thank the editor and referees for comments that improved the quality of the paper. I also thank Eric Auerbach, Vittorio Bassi, Jinyong Hahn, Roger Moon, Hashem Pesaran, Kevin Song, and seminar audiences at UC Davis, USC INET, the 2018 Econometric Society Winter Meetings, the 2018 Econometrics Summer Masterclass and Workshop at Warwick, and the NetSci2018 Satellite on Causal Inference and Design of Experiments. This research is supported by NSF Grant SES-1755100.}}
\author{Michael P.\ Leung\thanks{Department of Economics, University of Southern California. E-mail: leungm@usc.edu.}} 
\begin{document}
\maketitle

\begin{abstract}

  {\sc Abstract.} We develop inference procedures robust to general forms of weak dependence. The procedures utilize test statistics constructed by resampling in a manner that does not depend on the unknown correlation structure of the data. We prove that the statistics are asymptotically normal under the weak requirement that the target parameter can be consistently estimated at the parametric rate. This holds for regular estimators under many well-known forms of weak dependence and justifies the claim of dependence-robustness. We consider applications to settings with unknown or complicated forms of dependence, with various forms of network dependence as leading examples. We develop tests for both moment equalities and inequalities.
  
  \vspace{15pt}

  \noindent {\sc JEL Codes}: C12, C31

  \noindent {\sc Keywords}: resampling, dependent data, social networks, clustered standard errors
 
\end{abstract}

\addcontentsline{toc}{part}{Main Paper}
\newpage

%----------------------------------------------------------------------
\section{Introduction}
%----------------------------------------------------------------------
\onehalfspacing

This paper builds on randomized subsampling tests due to \cite{song_ordering-free_2016} and proposes inference procedures for settings in which the dependence structure of the data is complex or unknown. This is useful for a variety of applications using, for example, network data, clustered data when cluster memberships are imperfectly observed, or spatial data with unknown locations. The proposed procedures compare a test statistic, constructed using a set of resampled observations, to a critical value constructed either using a normal approximation or by resampling. Computation is the same regardless of the dependence structure. We prove that our procedures are asymptotically valid under the weak requirement that the target parameter can be consistently estimated at the $\sqrt{n}$ rate, a condition satisfied by most forms of weakly dependent data for regular estimators. In this sense, inference using our resampled statistics is robust to quite general forms of weak dependence. 

Consider the simple problem of inference on a scalar population mean $\mu_0$. Typically we assume that the sample mean $\bar{X}$ is asymptotically normal in the sense that
\begin{equation*}
  \sqrt{n}(\bar{X} - \mu_0) \dlimarrow \mathcal{N}(0, \bm{\Sigma}).
\end{equation*}

\noindent However, in a setting with complex or unknown forms of dependence, it may be difficult or unclear how to estimate $\bm{\Sigma}$ due to the covariance terms. A simple statistic we utilize for inference in this setting is
\begin{equation*}
  \tilde T_M = \sqrt{R_n} \hat{\bm{\Sigma}}^{-1/2} (\bar{X}^* - \mu_0),
\end{equation*}

\noindent where $\bar{X}^*$ is the mean of $R_n$ draws with replacement from the data and $\hat{\bm{\Sigma}} = n^{-1} \sum_{i=1}^n (X_i-\bar{X})^2$ is the (naive) sample variance. Using the identity
\begin{equation*}
  \tilde T_M = \underbrace{\tilde T_M - \E[\tilde T_M \mid \bm{X}]}_{[I]} + \underbrace{\E[\tilde T_M \mid \bm{X}]}_{[II]},
\end{equation*}

\noindent where $\bm{X}$ is the data, we show that $[I] \dlimarrow \mathcal{N}(0,1)$ and the bias term $[II]$ is asymptotically negligible if $R_n$ is chosen to diverge at a sufficiently slow rate. We can therefore compare $\tilde T_M$ to a normal critical value to conduct inference on $\mu_0$ regardless of the underlying dependence structure. As we later discuss, larger values of $R_n$ generate higher power through a faster rate of convergence for $[I]$ but also a larger bias term $[II]$, which generates size distortion. Thus for practical implementation, we suggest a rule of thumb for $R_n$ that accounts for this trade-off.

\cite{song_ordering-free_2016} proposes tests for moment equalities when the data satisfies a particular form of weak dependence known as ``local dependence'' in which the dependence structure is characterized by a graph. The novelty of his procedure is that its implementation and asymptotic validity does not require knowledge of the graph. We study essentially the same test statistic but generalize his theoretical results, showing asymptotic validity under the substantially weaker requirement of estimability at a $\sqrt{n}$ rate, which significantly broadens the applicability of the method. Additionally, we propose new methods for testing moment inequalities.

Many resampling methods are available for spatial, temporal, and clustered data when the dependence structure is known \citep[e.g.][]{cameron2008bootstrap,lahiri2013resampling,politis_subsampling_1999}. These methods are used to construct critical values for a test statistic computed with the original dataset $\bm{X}$. For the critical values to be asymptotically valid, resampling is implemented in a way that mimics the dependence structure of the data, commonly by redrawing blocks of neighboring observations, but this requires knowledge of the dependence structure. In contrast, our procedures involve computing a resampled test statistic and critical values based on its limiting distribution conditional on the data. Hence, there is no need to mimic the actual dependence structure, which is why our procedures can be dependence-robust. 

Of course, the broad applicability of our procedures comes at a cost. The main drawback is inefficiency due to the fact that the test statistics are essentially computed from a subsample of observations. In contrast, the test statistic under conventional subsampling, for example, utilizes the full sample. Thus, in settings where inference procedures exist, our resampled statistics suffer from slower rates of convergence, which yield tests with lower power and can exacerbate finite-sample concerns such as weak instruments. We interpret this as the cost of dependence robustness. Our objective is not to propose a procedure that is competitive with existing procedures but rather to provide a broadly applicable and robust inference procedure that can be useful when little is known about the dependence structure or when this structure is complex and no inference procedure is presently available.

We consider four applications. The first is regression with unknown forms of weak dependence. This setting is relevant when the dependent or independent variables are functions of a social network, as in network regressions \citep{chandrasekhar2016econometrics}. Another special case is cluster dependence when the level of clustering is unknown and the number of clusters is small, settings in which conventional clustered standard errors can perform poorly \citep{cameron2015practitioner}. The second application is estimating treatment spillovers on a partially observed network. The third is inference on network statistics, a challenging setting because different network formation models induce different dependence structures. The fourth application is testing for a power law distribution, a problem that has received a great deal of attention in economics, network science, biology, and physics \citep{barabasi1999emergence,gabaix2009power,newman2005power}. Widely used methods in practice assume that the underlying data is i.i.d.\ \citep{clauset2009power,klaus2011statistical}, which is often implausible in applications involving spatial, financial, or network data.

The outline of the paper is as follows. The next section introduces our inference procedures. We then discuss four applications in \autoref{stapp}, followed by an empirical illustration on testing for power law degree distributions in \autoref{sapp}.  Next, \autoref{smain} states formal results on the asymptotic validity of our procedures. In \autoref{ssims}, we present simulation results from four different data-generating processes. Lastly, \autoref{sconclude} concludes. 

%----------------------------------------------------------------------
\section{Overview of Methods}\label{smodel}
%----------------------------------------------------------------------

We begin with a description of our proposed inference procedures. Throughout, let $\bm{X} = \{X_i\}_{i=1}^n \subseteq \R^m$ be a set of $n$ identically distributed random vectors with possibly dependent row elements. Denote the sample mean of $\bm{X}$ by $\bar{X}$. The goal is to conduct inference on some parameter $\mu_0 \in \R^m$. A simple example is the population mean $\mu_0 = \E[X_1]$, but we will also consider other parameters when discussing asymptotically linear estimators. Our main assumption will require $\bm{X}$ to be weakly dependent in the sense that $\bar{X}$ is $\sqrt{n}$-consistent for $\mu_0$.

%----------------------
\subsection{Moment Equalities}\label{smomeq}
%----------------------

We first consider testing the null hypothesis that $\mu_0 = \mu$ for some $\mu \in \R^m$ and constructing confidence regions for $\mu_0$. Let $R_n \geq 2$ be an integer and $\Pi$ the set of all bijections (permutation functions) on $\{1, \dots, n\}$. Let $\{\pi_r\}_{r=1}^{R_n}$ be a set of $R_n$ i.i.d.\ uniform draws from $\Pi$ and $\pi = (\pi_r)_{r=1}^{R_n}$. Define the sample variance matrix $\hat{\bm{\Sigma}} = n^{-1}\sum_{i=1}^n (X_i - \bar{X})(X_i - \bar{X})'$. 

We focus on two test statistics. The first is the {\em mean-type statistic}, given by
\begin{equation*}
  T_M(\mu; \pi) = \tilde T_M(\mu;\pi)' \tilde T_M(\mu;\pi), \quad\text{where}\quad \tilde T_M(\mu;\pi) = \frac{1}{\sqrt{R_n}} \sum_{r=1}^{R_n} \hat{\bm{\Sigma}}^{-1/2}\left(X_{\pi_r(1)} - \mu\right).
\end{equation*}

\noindent That is, $\tilde T_M(\mu;\pi)$ is computed by drawing $R_n$ observations with replacement from $\{\hat{\bm{\Sigma}}^{-1/2}(X_i - \mu)\}_{i=1}^n$ and then taking the average and scaling up by $\sqrt{R_n}$. Note that we compute $\hat{\bm{\Sigma}}$ using the full sample.
 
The second test statistic is the {\em U-type statistic}, which is given by
\begin{equation*}
  T_U(\mu; \pi) = \frac{1}{\sqrt{m R_n}} \sum_{r=1}^{R_n} (X_{\pi_r(1)}-\mu)' \hat{\bm{\Sigma}}^{-1} (X_{\pi_r(2)}-\mu) 
\end{equation*}

\noindent and essentially follows \cite{song_ordering-free_2016}. Unlike the mean-type statistic, here we resample pairs of observations with replacement and compute a quadratic form. 

\bigskip

\noindent {\bf Inference Procedures.} We prove that if $\bar{X}$ is $\sqrt{n}$-consistent for $\mu_0$, then
\begin{align}
  \begin{split}\label{goal}
  &\tilde T_M(\mu_0; \pi) \dlimarrow \mathcal{N}(\bm{0},\bm{I}_m) \quad\text{if}\quad R_n/n \rightarrow 0 \quad\text{and} \\
  &T_U(\mu_0; \pi) \dlimarrow \mathcal{N}(0,1) \quad\text{if}\quad \sqrt{R_n}/n \rightarrow 0
  \end{split} 
\end{align} 

\noindent (under regularity conditions), where $\bm{I}_m$ is the $m\times m$ identity matrix. CLTs for a wide range of notions of weak dependence, including mixing, near-epoch dependence, various forms of network dependence, etc., can be used to verify $\sqrt{n}$-consistency. There are some examples of dependent data that violate $\sqrt{n}$-consistency. One is cluster dependence with many clusters, large cluster sizes, and strongly dependent observations within clusters \citep{hansen2019asymptotic}. We discuss in \autoref{slower} below how other rates of convergence can be accommodated by adjusting $R_n$.

Result \eqref{goal} enables us to construct critical values for testing. For example, to test the null that $\mu_0 = \mu$ against a two-sided alternative, we can use
\begin{equation}
   \ind\{T_U(\mu; \pi) > z_{1-\alpha} \} \quad\text{or}\quad \ind\{T_M(\mu; \pi) > q_{1-\alpha} \}, \label{tests}
\end{equation}

\noindent where $z_{1-\alpha}$ and $q_{1-\alpha}$ are respectively the $(1-\alpha)$-quantiles of the standard normal distribution and chi-square distribution with $m$ degrees of freedom. Note that the U-type statistic is two-sided in nature, which is why in \eqref{tests} we simply compare $T_U(\mu; \pi)$, rather than its absolute value, to a normal quantile. To test one-sided alternatives with the U-type statistic, we can additionally exploit the sign of $\bar{X}$, as done in the application in \autoref{stpl} and the moment inequality test below. For instance, we can choose to reject only if the test statistic exceeds its critical value and the sign of $\bar{X}$ is positive.

In the case of scalar data ($m=1$), we obtain the following simple confidence interval (CI) for $\mu_0$ using the mean-type statistic:
\begin{equation}
  \frac{1}{R_n} \sum_{r=1}^{R_n} X_{\pi_r(1)} \pm z_{1-\alpha/2} \frac{\hat{\bm{\Sigma}}^{1/2}}{\sqrt{R_n}}. \label{MtypeCI}
\end{equation}

\noindent Alternatively, we can use the U-type statistic to obtain a CI by test inversion. 

\bigskip

\noindent {\bf Why This Works.} To see the intuition behind \eqref{goal}, consider the mean-type statistic. Define $W_{M,r} = \hat{\bm{\Sigma}}^{-1/2}(X_{\pi_r(1)} - \mu_0)$. Then
\begin{equation}
  \tilde T_M(\mu_0;\pi) = \underbrace{\frac{1}{\sqrt{R_n}} \sum_{r=1}^{R_n} \left( W_{M,r} - \E[W_{M,r} \mid \bm{X}] \right)}_{[I]} + \underbrace{\frac{1}{\sqrt{R_n}} \sum_{r=1}^{R_n} \E[W_{M,r} \mid \bm{X}]}_{[II]}. \label{decomp}
\end{equation}

\noindent Some algebra shows that $[II] = (R_n/n)^{1/2} \hat{\bm{\Sigma}}^{-1/2} n^{-1/2} \sum_{i=1}^n (X_i-\mu_0)$. Since $R_n/n = o(1)$ and $n^{-1/2} \sum_{i=1}^n (X_i-\mu_0) = O_p(1)$ under the assumption of $\sqrt{n}$-consistency, we have $[II] = o_p(1)$, provided the sample variance converges to a positive-definite matrix. Since the random permutations are i.i.d.\ conditional on $\bm{X}$, $[I] \dlimarrow \mathcal{N}(\bm{0},\bm{I}_m)$. The proof for $T_U(\mu_0; \pi)$ follows a similar logic.

\bigskip

\noindent {\bf Choice of $R_n$ and Statistic.} When choosing $R_n$, we face the following trade-off. A larger value corresponds to using a larger number of observations to construct the test statistic, which translates to higher power through a faster rate of convergence for part $[I]$ of decomposition \eqref{decomp}. On the other hand, a smaller value ensures that the bias term $[II]$ in \eqref{decomp} is negligible, which is important for size control. As is clear from the previous proof sketch, for the mean-type statistic, the bias and rate of convergence are respectively of order $(R_n/n)^{1/2}$ and $R_n^{-1/2}$. For the U-type statistic, they are instead $\sqrt{R_n}/n$ and $R_n^{-1/4}$, as discussed in \autoref{power}. For both statistics, we propose choosing $R_n$ to minimize the sum of these terms, yielding
\begin{equation}
  R_n^M = \sqrt{n} \quad\text{and}\quad R_n^U = (n/2)^{4/3}
  \label{R}
\end{equation}

\noindent for the mean- and U-type statistics, respectively. The choice of minimizing the sum of the two is only a heuristic but at least reflects an asymptotic trade-off between control of type I and type II errors. We only seek to provide a practical recommendation that accounts in some way for the trade-off and leave more sophisticated and data-dependent choices of $R_n$ to future work. We also note that simulation results in \autoref{ssims} show that the tests perform similarly for a variety of choices of $R_n$ around \eqref{R}.

The rates of convergence of the mean- and U-type statistics when choosing $R_n$ according to \eqref{R} are respectively $n^{-1/4}$ and $n^{-1/3}$. In general, the U-type statistic has better power properties, as shown theoretically in \cite{song_ordering-free_2016} in the context of locally dependent data. Our simulations confirm this for \eqref{R} across a wider range of dependence structures, which leads us to recommend use of the U-type over the mean-type statistic for smaller samples. The main appeal of the mean-type statistic is the ease of constructing CIs, as in \eqref{MtypeCI}.

\bigskip

\noindent {\bf Asymptotically Linear Estimators.} Suppose we observe identically distributed data $\bm{Z} = \{Z_i\}_{i=1}^n$ and are interested in a parameter $\beta_0 \in \R^d$. Suppose there exists a parameter $\theta_0$ and a function $\psi$ satisfying $\E[\psi(Z_1; \beta_0, \theta_0)] = 0$, and let $\hat\theta$ be an estimate of $\theta_0$.  Consider an estimator $\hat\beta$ that is asymptotically linear in the sense that
\begin{equation*}
  \sqrt{n}(\hat\beta - \beta_0) = \frac{1}{\sqrt{n}} \sum_{i=1}^n \psi(Z_i; \beta_0, \hat\theta) + o_p(1) 
\end{equation*}

\noindent For example, in the case of maximum likelihood, $\hat\theta$ is the sample Hessian, and $\psi$ is the score function times the Hessian. We can then apply our procedures to conduct inference on $\beta_0$ by defining
\begin{equation}
  X_i = \psi(Z_i; \beta_0, \hat\theta) \label{ale2}
\end{equation}

\noindent and $\mu_0 = \E[\psi(Z_1; \beta_0, \theta_0)] = 0$. Note that in this example, $\mu_0$ is not the population mean of the ``data'' $\bm{X}$. As discussed in \autoref{firststage} below, under regularity conditions, our procedures are asymptotically valid if $(\hat\beta,\hat\theta)$ is $\sqrt{n}$-consistent for $(\beta_0,\theta_0)$.

\begin{remark}
  As pointed out to us by Eric Auerbach, an alternative dependence-robust test (here for the case $m=1$) is to reject when $\abs{\mathcal{Z} + \sqrt{n}(\bar{X}-\mu_0)/h_n} > z_{1-\alpha/2}$, where $\mathcal{Z} \sim \mathcal{N}(0,1)$ is independent of $\bm{X}$ and $h_n$ is a diverging sequence. Since $h_n$ is eventually larger than the asymptotic variance, the second term vanishes, and the test has asymptotic size $\alpha$ under the null. Under the alternative, the power tends to one at rate $\sqrt{n}/h_n$, which is always slower than $\sqrt{n}$. Thus, this test has similar power properties to our test, and the first-order asymptotics does not distinguish between them. Nonetheless, we do not view this test as a serious practical alternative. Choosing the tuning parameter $h_n$ literally corresponds to choosing the size of the standard error, which is clearly problematic in practice. Indeed, for any fixed choice of $h_n$, this test is almost always either conservative or anti-conservative. In contrast, our conditions suggest that for the U-type statistic (for example), $R_n$ should not be chosen larger than $n^2$ for the claim of size control to be considered credible in finite samples, given that $\sqrt{R_n}/n\rightarrow 0$ is required for asymptotic validity. We also provide guidance for choosing $R_n$ in practice \eqref{R} and validate this choice across a wide range of dependence structures in extensive simulation experiments.
\end{remark}

\begin{remark}\label{nonrandom}
  Since the test statistics are random conditional on the data due to the permutation draws $\pi$, different researchers can reach different conclusions with the same dataset. This occurs with small probability for $n$ large, but for smaller samples, it is useful to have a procedure less sensitive to $\pi$. In his \S 3.5, \cite{song_ordering-free_2016}, proposes a procedure that allows the researcher to make the influence of $\pi$ as small as desired, which we reproduce here. Let $\{\tilde\pi_{r\ell}\colon \ell \,{=}\, 1,\dots,L;\, r \,{=}\, 1,\dots,R_n\}$ be i.i.d.\ uniform draws from $\Pi$ and $\tilde\pi_\ell = (\tilde\pi_{r\ell})_{r=1}^{R_n}$. Define the ``randomized confidence function''
  \begin{equation*}
    f_L(\mu; \alpha) = \frac{1}{L} \sum_{\ell=1}^L \ind\left\{ T_U(\mu; \tilde\pi_\ell) \leq z_{1-\alpha} \right\}.
  \end{equation*}

  \noindent (For the mean-type statistic, we instead use $T_M(\mu; \tilde\pi_\ell)$ and $q_{1-\alpha}$ in place of $T_U(\mu; \tilde\pi_\ell)$ and $z_{1-\alpha}$, respectively.) By taking $L$ as large as desired, we can make $f_L(\mu; \alpha)$ arbitrarily close to a nonrandom function of the data by the law of large numbers, which solves the randomness problem. To see how this function can be used for inference, for any small $\beta \in (0,\alpha)$ chosen by the econometrician, define the confidence region
  \begin{equation*}
    C_L(\alpha; \beta) = \left\{ \mu \in \R^m\colon f_L(\mu; \alpha-\beta) \geq 1-\alpha \right\}.
  \end{equation*}

  \noindent Using \eqref{goal}, it is straightforward to show that $\lim_{n\rightarrow\infty} \lim_{L\rightarrow\infty} \prob(\mu_0 \in C_L(\alpha; \beta)) \geq 1-\alpha$, so the confidence region has the desired asymptotic coverage. For the case of locally dependent data, this follows from Corollary 3.1 of \cite{song_ordering-free_2016}. It immediately generalizes to other forms of weak dependence by applying our \autoref{mq}.
\end{remark}

%----------------------
\subsection{Moment Inequalities}
%----------------------

We next consider testing the null $\mu_0 \leq 0$ for $\mu_0 = \E[X_1]$, where ``$\leq$'' denotes component-wise inequality. This is relevant, for example, for inference in strategic models of network formation \citep{sheng2014} and models of social interactions \citep{li2016partial}. Let $T_{U,k}(\mu_k; \pi)$ be the U-type statistic applied to scalar data $\{X_{i,k}\}_{i=1}^n$, where $X_{i,k}$ is the $k$th component of $X_i$ and $\mu_k \in \R$. Also let $\bar{X}_k$ be the $k$th component of $\bar{X}$ and $\hat{\bm{\Sigma}}_{kk}$ the $k$th diagonal of $\hat{\bm{\Sigma}}$. We propose the test statistic
\begin{align*}
  &Q_n(\pi) = \max_{1\leq k\leq m} \big\{ T_{U,k}(0; \pi) - \hat\lambda_k\ind\{\bar{X}_k < 0\} \big\}, \quad\text{where} \\
  &\hat\lambda_k = \bar{X}_k \hat{\bm{\Sigma}}_{kk}^{-1} \frac{1}{\sqrt{mR_n}} \sum_{r=1}^{R_n} \left( X_{\pi_r(1),k} + X_{\pi_r(2),k} \right) - \sqrt{\frac{R_n}{m}} \hat{\bm{\Sigma}}_{kk}^{-1} \bar{X}_k^2.
\end{align*}

\noindent While complicated in appearance, this is easy to compute in practice. Define
\begin{equation}
  \tilde Q_n(\pi) = \max_{1\leq k\leq m} T_{U,k}(\bar{X}_k;\pi), \label{tildeQn}
\end{equation} 

\noindent and let $c_{1-\alpha}$ be the $(1-\alpha)$-quantile of the conditional-on-$\bm{X}$ distribution of $\tilde Q_n(\pi)$. Our proposed test is to reject if and only if $\phi_n = 1$ for
\begin{equation}
  \phi_n = \ind\{ Q_n(\pi) > c_{1-\alpha} \}. \label{mitest}
\end{equation}

\noindent In practice, we can approximate $c_{1-\alpha}$ arbitrarily well by resampling $\pi$ $L$ times, computing $\tilde Q_n(\pi)$ for each draw, and then taking the appropriate sample quantile of this set of statistics. Formally, let $\tilde\pi_\ell = (\tilde\pi_{r\ell})_{r=1}^{R_n}$, where $\{\tilde\pi_{r\ell}\colon \ell \,{=}\, 1,\dots,L;\, r \,{=}\, 1,\dots,R_n\}$ are i.i.d.\ uniform draws from $\Pi$. The feasible critical value is
\begin{equation}
  c_{L,1-\alpha} = \inf\left\{ c'>0\colon \frac{1}{L} \sum_{\ell=1}^L \ind\left\{ \tilde Q_n(\tilde\pi_\ell) > c'\right\} \leq \alpha \right\}. \label{thepermCV}
\end{equation} % https://www.stat.berkeley.edu/~bartlett/courses/2013spring-stat210b/notes/20notes.pdf

In \autoref{mils}, we show that the proposed test uniformly exactly controls size. The intuition behind the test is as follows. Some algebra shows that
\begin{equation}
  Q_n(\pi) = \max_{1\leq k\leq m} \big\{T_{U,k}(\bar{X}_k; \pi) + \hat\lambda_k\ind\{\bar{X}_k \geq 0\}\big\}, \label{algebra}
\end{equation}

\noindent which is similar to $\tilde Q_n(\pi)$, except for the presence of the $\hat\lambda_k\ind\{\bar{X}_k \geq 0\}$ term. The indicator serves to detect the sign of $\mu_{0k}$, the $k$th component of $\mu_0$, and thus give the test power. To see this, first note that, in the appendix, we show in \eqref{inputsplit} and \eqref{RXfact} in the proof of \autoref{RSMI} that, for any $k$, $\hat\lambda_k \approx (R_n^{1/4}\mu_{0k})^2m^{-1/2}\hat{\bm{\Sigma}}^{-1}_{kk} + \mu_{0k}\cdot O_p(1)$, and $R_n^{1/4}\bar{X}_k \approx R_n^{1/4}\mu_{0k}$. We apply these two ``facts'' to the case $m=1$ for illustration.

First, under a fixed null, $\hat\lambda_k\ind\{\bar{X}_k \geq 0\} = \hat\lambda_k\ind\{R_n^{1/4}\bar{X}_k \geq 0\} \approx 0$ since either $R_n^{1/4}\mu_{0k} = 0$, in which case $\hat\lambda_k \approx 0$ by the first fact above, or $R_n^{1/4}\mu_{0k} < 0$, in which case the indicator is eventually zero by the second fact above. Consequently, $Q_n(\pi)$ and $\tilde Q_n(\pi)$ have the same asymptotic distributions, and the test controls size. Second, under the alternative, $R_n^{1/4}\bar{X}_k$ is instead eventually positive, so $\hat\lambda_k\ind\{R_n^{1/4}\bar{X}_k \geq 0\} \approx \hat\lambda_k$, which is positive with high probability. Indeed, for a fixed alternative, $\hat\lambda_k$ diverges by the first fact above and therefore so does $Q_n(\pi)$. On the other hand, $\tilde Q_n(\pi)$ has a non-degenerate limit distribution, so the test is consistent.

\begin{remark}\label{MIasymp}
  For the case $m=1$, dropping the subscript $k$, we have directly by \autoref{quadstat} in the appendix that $\tilde Q_n(\pi) \dlimarrow \mathcal{N}(0,1)$. Then we can use the test 
  \begin{equation*}
    \tilde\phi_n = \ind\{Q_n(\pi) - \hat\lambda\ind\{\bar{X} < 0\} > z_{1-\alpha}\}
  \end{equation*}

  \noindent which is similar to \eqref{mitest}, except it uses the computationally simpler asymptotic critical value $z_{1-\alpha}$, rather than the permutation critical value $c_{1-\alpha}$. 
\end{remark}

%----------------------------------------------------------------------
\section{Applications}\label{stapp}
%----------------------------------------------------------------------

%----------------------
\subsection{Regression with Unknown Weak Dependence}\label{unknown}
%----------------------

Let $\bm{D}$ be an $n\times k$ matrix of covariates with $i$th row $D_i$, and
\begin{equation*}
  Y_i = D_i'\beta_0 + \varepsilon_i, 
\end{equation*}

\noindent where $\{(D_i, \varepsilon_i)\}_{i=1}^n$ is identically distributed but possibly dependent. Our goal is inference on the $j$th component of $\beta_0$, denoted by $\beta_{0j}$. Common sources of dependence are clustering \citep{bertrand2004much}, spatial autocorrelation \citep{barrios2012clustering,bester2009inference}, and network autocorrelation \citep{acemoglu2015state}. Often the precise form of dependence may be unknown, or there may be insufficient data to compute conventional standard errors, for example if we do not fully observe the clusters, the spatial locations of the observations, or the network. If the OLS estimator is $\sqrt{n}$-consistent, however, inference on $\beta_{0j}$ is possible using resampled statistics.

In the context of network applications, a common exercise is to regress an outcome on some measure of the network centrality of a node \citep{chandrasekhar2016econometrics}. Such measures are inherently correlated across nodes even when links are i.i.d. Moreover, different models of network formation can lead to different expressions for the asymptotic variance. Because this model is an unknown nuisance parameter, it is useful to have an inference procedure that is valid for a large class of models.

To apply our procedure, we write the estimator in the form of a sample mean. Let $W_{ji}$ be the $ji$th component of the matrix $n (\bm{D}'\bm{D})^{-1} \bm{D}'$. Then 
\begin{equation*}
  \hat\beta_j = \frac{1}{n} \sum_{i=1}^n W_{ji}Y_i. 
\end{equation*}

\noindent is the OLS estimator for $\beta_{0j}$. This fits into the setup of \eqref{ale2} if we define $Z_i = (Y_i, D_i)$, $\hat\theta = n^{-1}\bm{D}'\bm{D}$, and $\psi(Z_i; \beta_0, \hat\theta) = W_{ji} Y_i - \beta_{0j}$. Thus, when computing our test statistics, we resample elements of $\bm{X}$ for $X_i = W_{ji}Y_i - \beta_{0j}$.

The main assumption required for the asymptotic validity of our procedure is $\sqrt{n}$-consistency of the estimators. For cluster dependence, this holds under conventional many-cluster asymptotics, where the number of observations in each cluster is small but the number of clusters is large. However, unlike with clustered standard errors, we need not know the right level of clustering or even observe cluster memberships. We can also allow the number of clusters to be small, possibly equal to one, so long as the data is weakly dependent within clusters in the sense of yielding $\sqrt{n}$-consistency. Within-cluster weak dependence is also required by \cite{canay2017randomization}, \cite{ibragimov2010t}, and \cite{ibragimov2016inference}, who propose novel inference procedures for cluster dependence with a small number of clusters. Resampled statistics are advantageous because we can allow for only a single cluster and do not require knowledge of cluster memberships.

For spatial dependence, $\sqrt{n}$-consistency can be established using CLTs for mixing or near-epoch dependent data \citep{jenish2012spatial}. CLTs for network statistics are mentioned in \autoref{sns}.

%----------------------
\subsection{Treatment Effects with Spillovers}\label{stes}
%----------------------

Suppose we observe data from a randomized experiment on a single network, where for each node $i$, we observe an outcome $Y_i$, a binary treatment assignment $D_i$, the number of nodes connected to $i$ (``network neighbors'') $\gamma_i$, and the number of treated network neighbors $T_i$. Consider the following outcome model studied in \cite{leung2017treatment}:
\begin{equation*}
  Y_i = r\left(D_i, T_i, \gamma_i, \varepsilon_i\right).
\end{equation*}

\noindent This departs from the conventional potential outcomes model by allowing $r(\cdot)$ to depend on $T_i$, which violates the stable unit treatment value assumption. The object of interest is the following measure of treatment/spillover effects:
\begin{equation}
  \E\left[ r(d,t,\gamma, \varepsilon_1) \mid \gamma_1=\gamma \right] - \E\left[ r(d',t',\gamma, \varepsilon_1) \mid \gamma_1=\gamma \right], \label{tse}
\end{equation}

\noindent where $t,t'\leq \gamma \in \mathbb{N}$. Variation across $d,d'$ identifies the direct causal effect of the treatment, while variation across $t,t'$ identifies a spillover effect, conditional on the number of neighbors. \cite{leung2017treatment} provides conditions on the network and dependence structure of $\{\varepsilon_i\}_{i=1}^n$ under which the sample analog of \eqref{tse} is $\sqrt{n}$-consistent. For example, we can allow $\varepsilon_i$ and $\varepsilon_j$ to be correlated if $i$ and $j$ are connected. Therefore, the setup falls within the scope of our assumptions.

Suppose the econometrician obtains data $\{W_i\}_{i=1}^n$ for $W_i = (Y_i,D_i,T_i,\gamma_i)$ by snowball-sampling 1-neighborhoods. That is, she first obtains a random sample of units, from which she gathers $(Y_i,D_i)$, and then she obtains the network neighbors of those units and their treatment assignment, from which she gathers $(T_i,\gamma_i)$. This is a very common method of network sampling. However, standard error formulas provided by \cite{leung2017treatment} may require knowledge of the path distances between observed units, which are usually only partially observed under this form of sampling. 

Our proposed procedures can be used in this setting. Let $\ind_i(d,t,\gamma) = \ind\{D_i=d, T_i=t, \gamma_i=\gamma\}$. The analog estimator for \eqref{tse} is
\begin{equation*}
  \frac{\sum_{i=1}^n Y_i \ind_i(d,t,\gamma)}{\sum_{i=1}^n \ind_i(d,t,\gamma)} - \frac{\sum_{i=1}^n Y_i \ind_i(d',t',\gamma)}{\sum_{i=1}^n \ind_i(d',t',\gamma)}. 
\end{equation*}

\noindent This fits into the setup of \eqref{ale2} by defining $Z_i = W_i$, 
\begin{equation*}
  \hat\theta = \left( \frac{1}{n} \sum_{i=1}^n \ind_i(d,t,\gamma), \frac{1}{n} \sum_{i=1}^n \ind_i(d',t',\gamma) \right), \quad\text{and}
\end{equation*}
\begin{equation*}
  \psi(Z_i; \beta_0, \hat\theta) = \frac{Y_i \ind_i(d,t,\gamma)}{n^{-1} \sum_{j=1}^n \ind_j(d,t,\gamma)} - \frac{Y_i \ind_i(d',t',\gamma)}{n^{-1} \sum_{j=1}^n \ind_j(d',t',\gamma)} - \beta_0, 
\end{equation*}

\noindent where $\beta_0$ is the hypothesized value of the true average treatment/spillover effect \eqref{tse}. 

%----------------------
\subsection{Network Statistics}\label{sns}
%----------------------

Inference methods for network statistics are important for network regressions, as discussed in \autoref{unknown}, and strategic models of network formation \citep{sheng2014}. They are also of inherent interest in the networks literature, where theoretical work is often motivated by ``stylized facts'' about the structure of real-world social networks \citep{barabasi2015,jackson2010}. These facts are obtained by computing various summary statistics from network data. However, little attempt has been made to account for the sampling variation of these point estimates, perhaps due to the wide variety of network formation models, which induce different dependence structures. This motivates the use of resampled statistics, which can be used to conduct inference on network statistics without taking a stance on the network formation model.

We next consider two stylized facts that have arguably received the most attention in the literature: clustering and power law degree distributions. This subsection focuses on the former, while the latter is discussed in a more general context in \autoref{stpl}. For a set of $n$ nodes, let $\bm{A}$ be a binary symmetric adjacency matrix that represents a network. Its $ij$th entry $A_{ij}$ is thus an indicator for whether $i$ and $j$ are linked. Define the {\em individual clustering} for a node $i$ in network $\bm{A}$ as
\begin{equation*}
  Cl_i(\bm{A}) = \frac{\sum_{j\neq i, k\neq j, k\neq i} A_{ij}A_{ik}A_{jk}}{\sum_{j\neq i, k\neq j, k\neq i} A_{ij}A_{ik}}, 
\end{equation*}

\noindent with $Cl_i(\bm{A}) \equiv 0$ if $i$ has at most one link. The denominator counts the number of pairs linked to $i$, while the numerator counts the number of {\em linked} pairs linked to $i$. The {\em average clustering coefficient} of $\bm{A}$ is defined as $n^{-1}\sum_{i=1}^n Cl_i(\bm{A})$.

This statistic is a common measure of {\em transitivity} or {\em clustering}, the tendency for individuals with partners in common to associate. A well-known stylized fact in the network literature is that most social networks exhibit nontrivial clustering, where ``nontrivial'' is defined relative to the null model in which links are i.i.d.\ \citep{jackson2010}. Under the null model, when $n$ is large, the average clustering coefficient is close to the probability of forming a link. Yet, the average clustering coefficient is typically larger than the empirical linking probability in practice, hence the stylized fact \citep[][Ch.\ 3]{barabasi2015}.

In order to formally assess whether average clustering is significantly different from the probability of link formation, we can use our tests \eqref{tests} with
\begin{equation*}
  X_i = Cl_i(\bm{A}) - \frac{2}{n-1} \sum_{j=1}^n A_{ij}. 
\end{equation*}

\noindent Then $\bar{X}$ is the difference between the average clustering coefficient and the empirical linking probability. To verify $\sqrt{n}$-consistency, we can apply CLTs derived by, for example, \cite{bickel_method_2011} and \cite{leung2017normal}.

%----------------------
\subsection{Testing for Power Laws}\label{stpl}
%----------------------

Testing whether data follows a power law distribution is of wide empirical interest in economics, finance, network science, neuroscience, biology, and physics \citep{barabasi2015,gabaix2009power,klaus2011statistical,newman2005power}. By ``power law'' we mean that the probability density or mass function of the data $f(x)$ is proportional to $x^{-\alpha}$ for some positive exponent $\alpha$. Many methods are available for estimating $\alpha$, for example maximum likelihood or regression estimators \citep{ibragimov2015heavy}. Given an estimate of the power law exponent, it is of interest to test how well the data accords with or deviates from a power law. Standard methods assume that the underlying data is i.i.d.\ \citep{clauset2009power}, but this is unrealistic for spatial, financial, and network data, motivating the use of resampled statistics. 

In the networks literature, a well-known stylized fact is that real-world social networks have power law degree distributions \citep{barabasi1999emergence}, where a node's degree is its number of connections. However, empirical evidence is commonly obtained by eyeballing log-log plots of degree distributions, rather than by using formal tests \citep{holme2018power}. \cite{broido2019scale} implement formal tests for power laws on a wide variety of network datasets, but their methods assume independent observations, despite the fact that network degrees are typically correlated.

The null hypothesis we next consider testing is motivated by \cite{klaus2011statistical}, which is that the power law fits no better than some reference null distribution, for example exponential or log-normal. This is operationalized using a Vuong test of the null that the expected log-likelihood ratio is zero. The numerator of the likelihood ratio is the power law distribution with an estimated exponent, and the denominator is the estimated null distribution. Under general misspecification, the log-likelihood ratio is zero if both models poorly fit the data, and less (greater) than zero if the null distribution fits better (worse) \citep{pesaran1987global,vuong1989likelihood}. 

For i.i.d.\ data and non-nested hypotheses, we can test the null by comparing the absolute value of the normalized log-likelihood ratio with a normal critical value. If the former is smaller, the models are equally good. Otherwise, we reject in favor of the power law (null distribution) if the log-likelihood ratio is greater (less) than zero.

We modify this procedure to account for dependence using the U-type statistic as follows. For identically distributed data $\{Z_i\}_{i=1}^n$, let $\ell_{PL}(Z_i,\alpha)$ be the likelihood of observation $i$ under a power law and $\ell_0(Z_i,\gamma)$ the likelihood under the null distribution, which is parameterized by $\gamma$. Then the null hypothesis is
\begin{equation*}
  \E\left[ \log \ell_{PL}(Z_i,\alpha) - \log \ell_0(Z_i,\gamma) \right] = 0. 
\end{equation*}

\noindent This fits into our setup \eqref{ale2} by defining $\hat\theta = (\hat\alpha, \hat\gamma)$ (the estimates of $(\alpha,\gamma)$) and 
\begin{equation*}
  X_i = \log \ell_{PL}(Z_i,\hat\alpha) - \log \ell_0(Z_i,\hat\gamma). 
\end{equation*}

\noindent We compute the U-type statistic using these $X_i$'s and compare it to a normal critical value. If the U-type statistic is smaller, then the models are equally good. Otherwise, we reject in favor of the power law (null distribution) if $\bar{X}$, the estimated log-likelihood ratio computed on the full dataset, is greater (less) than zero, as in \cite{vuong1989likelihood}. 

%----------------------------------------------------------------------
\section{Empirical Application}\label{sapp}
%----------------------------------------------------------------------

\cite{jackson2007meeting} propose a model of network formation that generates a degree distribution parameterized by $r$, which interpolates between the exponential and power law distributions. Their model provides microfoundations for the different distributions. When $r \rightarrow \infty$, the network is formed primarily through random meetings, and the distribution is exponential. When $r \rightarrow 0$, the network is formed primarily through ``network-based meetings,'' as nodes are more likely to meet neighbors of nodes that were previously met. Since high-degree nodes are more likely to be met through network-based meetings, this corresponds to a ``rich-get-richer'' or ``preferential-attachment'' mechanism that generates a power law degree distribution.

The authors estimate $r$ using data on six distinct social networks and informally assess the extent to which the estimated distributions depart from a power law. See their paper for descriptions of the data. In this section, we use the same datasets to implement the test described in \autoref{stpl}, using the exponential distribution as the null. We set the lower support points of the exponential and power law distributions at one and estimate the parameters of the distribution using (pseudo) maximum likelihood.  

\begin{table}[ht]
\centering
\caption{Results}
\begin{threeparttable}
\begin{tabular}{lrrrrrrrrrrrr}
\toprule
{} & \multicolumn{2}{c}{Coauthor} & \multicolumn{2}{c}{Radio} & \multicolumn{2}{c}{Prison} & \multicolumn{2}{c}{Romance} & \multicolumn{2}{c}{Citation} & \multicolumn{2}{c}{WWW} \\
\cmidrule{2-13}
Exp.\     &     \multicolumn{2}{c}{1.77} &     \multicolumn{2}{c}{1.46} &     \multicolumn{2}{c}{1.67} &     \multicolumn{2}{c}{1.94} &    \multicolumn{2}{c}{1.43} &      \multicolumn{2}{c}{1.79} \\
LL           &   \multicolumn{2}{c}{-36.45} &    \multicolumn{2}{c}{-1.75} &    \multicolumn{2}{c}{-6.13} &   \multicolumn{2}{c}{-13.75} &   \multicolumn{2}{c}{-3.84} &     \multicolumn{2}{c}{62.46} \\
Naive &         \multicolumn{2}{c}{E} &         \multicolumn{2}{c}{E} &         \multicolumn{2}{c}{E} &         \multicolumn{2}{c}{E} &        \multicolumn{2}{c}{E} &          \multicolumn{2}{c}{P} \\
$r$          &     \multicolumn{2}{c}{4.7} &     \multicolumn{2}{c}{5} &  \multicolumn{2}{c}{$\infty$} &  \multicolumn{2}{c}{$\infty$} &    \multicolumn{2}{c}{0.63} &      \multicolumn{2}{c}{0.57} \\
$n$          & \multicolumn{2}{c}{56639} &    \multicolumn{2}{c}{41} &    \multicolumn{2}{c}{60} &   \multicolumn{2}{c}{572} &  \multicolumn{2}{c}{233} & \multicolumn{2}{c}{325729} \\
\midrule
{} &      $R_n$ & RS &     $R_n$ & RS &   $R_n$ & RS &    $R_n$ & RS &    $R_n$ & RS &      $R_n$ & RS \\
\cmidrule{2-13}
 &  60k &  E &    37 &  N &  65 &  E & 1130 &  E &  692 &  E &  60k &  P \\
 &  80k &  E &    50 &  N &  86 &  E & 1507 &  E &  923 &  N &  80k &  P \\
RS & 100k &  E &    62 &  N & 108 &  E & 1884 &  E & 1154 &  N & 100k &  P \\
 & 120k &  E &    74 &  N & 130 &  E & 2261 &  E & 1385 &  E & 120k &  P \\
 & 140k &  E &    87 &  N & 151 &  E & 2638 &  E & 1616 &  E & 140k &  P \\
\bottomrule
\end{tabular}
\label{results}
\begin{tablenotes}[para,flushleft]
  \footnotesize ``Exp.'' $=$ estimated power law exponent. ``LL'' $=$ the normalized log-likelihood ratio. ``RS'' $=$ conclusion of our test, and ``Naive'' $=$ conclusion of i.i.d.\ test, where P $=$ power law, E $=$ exponential, and N $=$ fail to reject.
\end{tablenotes}
\end{threeparttable}
\end{table}

\autoref{results} displays the results of the tests. Row ``Exp.'' displays the estimated power law exponent, ``LL'' the normalized log-likelihood ratio, ``$r$'' the estimated value of $r$ from \cite{jackson2007meeting}, and $n$ the sample size. ``Naive'' displays the conclusion of the conventional Vuong test that assumes the data is i.i.d., with the conclusion ``P'' denoting power law, ``E'' denoting exponential, and ``N'' meaning the null is not rejected. Finally, the bottom five rows display the conclusions of our test for different values of $R_n$ to assess the robustness of the conclusions. Recalling the definition of $R_n^U$ from \eqref{R}, the rows correspond to $R_n = \min\{R_n^U \cdot \epsilon, 100\text{k}\}$ for $\epsilon \in \{0.6,0.8,1,1.2,1.4\}$. Thus, the middle of the bottom rows is our suggested choice $R_n^U$. We truncate $R_n$ at 100k since this is large enough to draw a robust conclusion.

The results of all three methods are in agreement for the coauthor, prison, romance, and WWW networks. This is due to the large values of the normalized log-likelihood ratios, which make the conclusion rather obvious regardless of the test used. For the other datasets, however, the methods draw different conclusions.

For the ham radio network, \cite{jackson2007meeting} estimate $r$ to be 5. As they note, this means network-based meetings are about eight times less common compared to the WWW network, so their degree distributions should be closer to the exponential than the power law distribution. However, our test finds insufficient evidence to reject the null for the ham radio network due to the very small sample size. The Vuong test rejects in favor of the exponential distribution, despite the normalized log-likelihood ratio being smaller (-1.75). This may be because the test assumes i.i.d.\ data, so the sample variance of the log-likelihoods may be underestimated.

\cite{jackson2007meeting} estimate $r$ to be close to zero for the citation network, which favors the power law. In contrast, the Vuong test rejects in favor of the exponential distribution, while our test either concludes exponential or fails to reject, depending on the value of $R_n$. This is due to the negative normalized log-likelihood -3.84, which is large enough for the i.i.d.\ test to draw a clear conclusion but perhaps not quite large after adjusting for dependence, which explains the ambiguity in the conclusion of our test. Still, neither test concludes the data is consistent with a power law, unlike the estimate of \cite{jackson2007meeting}.

%----------------------------------------------------------------------
\section{Large-Sample Theory}\label{smain}
%----------------------------------------------------------------------

This section considers a generalization of the setup in \autoref{smodel} in which $\bm{X}$ is a triangular array, so $X_i$ and $\mu_0$ may implicitly depend on $n$. This is important to accommodate network applications since, for example, when the network is sparse, the linking probability decays to zero with $n$. All proofs can be found in the appendix.

%----------------------
\subsection{CLT for Resampled Statistics}
%----------------------

For any vector $v$, let $\norm{v}$ denote its sup norm. The next theorem shows that the U-type (mean-type) statistic is asymptotically normal (chi-square).

\begin{theorem}\label{mq}
  For all $n$, let $R_n \geq 2$ be an integer. Suppose the following conditions hold under asymptotics sending $n\rightarrow\infty$.
  \begin{enumerate}[(a)]
    \item $n^{-1/2}\sum_{i=1}^n (X_i - \mu_0) = O_p(1)$. \label{consist0}
    \item There exists a positive-definite matrix $\bm{\Sigma}$ such that $\hat{\bm{\Sigma}} \plimarrow \bm{\Sigma}$. \label{sampvar0}
    \item $n^{-1} \sum_{i=1}^n \norm{X_i}^{2+\delta} = O_p(1)$ for some $\delta>0$. \label{4th0}
  \end{enumerate}

  \noindent If $R_n \rightarrow\infty$ and $R_n/n = o(1)$, then 
  \begin{equation*}
    T_M(\mu_0; \pi) \dlimarrow \chi^2_m \quad\text{and}\quad T_M(\bar{X}; \pi) \dlimarrow \chi^2_m, 
  \end{equation*}

  \noindent where $\chi^2_m$ is the chi-square distribution with $m$ degrees of freedom. If $R_n\rightarrow\infty$ and $\sqrt{R_n}/n = o(1)$, then
  \begin{equation*}
    T_U(\mu_0; \pi) \dlimarrow \mathcal{N}(0,1) \quad\text{and}\quad T_U(\bar{X}; \pi) \dlimarrow \mathcal{N}(0,1). 
  \end{equation*}

  \noindent Furthermore, these convergence statements are uniform in the sense that, conditional on $\bm{X}$, the cumulative distribution function (CDF) of each left-hand side statistic uniformly converges in probability to the CDF of the right-hand side distribution.
\end{theorem}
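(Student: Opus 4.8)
\noindent\textit{Plan of proof.} I will prove the last (uniform, in‑probability) assertion directly; the four displayed $\dlimarrow$ conclusions then follow automatically, because the unconditional CDF of each statistic is the $\bm{X}$‑average of its conditional CDF, so bounded convergence promotes the pointwise‑in‑$t$ in‑probability convergence of the conditional CDF to a continuous limit into ordinary convergence in distribution. Throughout I work on the event that $\hat{\bm{\Sigma}}$ is positive definite, which by (b) has probability tending to one, and I record the following consequences of (a)--(c): $\|\hat{\bm{\Sigma}}^{1/2}\| + \|\hat{\bm{\Sigma}}^{-1/2}\| = O_p(1)$; $\|\bar{X}\| \le \big(n^{-1}\sum_i\|X_i\|^{2+\delta}\big)^{1/(2+\delta)} = O_p(1)$, hence (since $\bar{X}-\mu_0 = O_p(n^{-1/2})$ by (a)) also $\|\mu_0\| = O_p(1)$ and $n^{-1}\sum_i\|X_i-\bar{X}\|^{2+\delta} = O_p(1)$. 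Without loss of generality $\delta \in (0,1]$, since for $\delta>1$ condition (c) holds with $\delta$ replaced by $1$.

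\smallskip

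\noindent\textit{Mean‑type statistic.} Use the decomposition $\tilde T_M(\mu_0;\pi) = [I]+[II]$ of \eqref{decomp}. As noted there, $[II] = (R_n/n)^{1/2}\hat{\bm{\Sigma}}^{-1/2}n^{-1/2}\sum_{i=1}^n(X_i-\mu_0) = O_p\big((R_n/n)^{1/2}\big) = o_p(1)$ by (a)--(b) and $R_n/n\to0$; and one checks that $\tilde T_M(\bar{X};\pi) = [I]$ \emph{exactly}. Conditionally on $\bm{X}$, $[I] = R_n^{-1/2}\sum_{r=1}^{R_n}V_r$ is a standardized sum of i.i.d.\ mean‑zero random vectors whose conditional covariance equals $\bm{I}_m$ \emph{exactly}, because $\mathrm{Var}(X_{\pi_r(1)}\mid\bm{X}) = n^{-1}\sum_i X_iX_i' - \bar{X}\bar{X}' = \hat{\bm{\Sigma}}$, and whose conditional $(2+\delta)$‑th absolute moment is at most a constant (depending on $m$) times $\|\hat{\bm{\Sigma}}^{-1/2}\|^{2+\delta}\,n^{-1}\sum_i\|X_i-\bar{X}\|^{2+\delta} = O_p(1)$. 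A multivariate Berry--Esseen inequality over the class of convex Borel sets (the $(2+\delta)$‑moment version, reached from the third‑moment version by truncation) then yields $\Delta_n := \sup_{C\ \mathrm{convex}}|\prob([I]\in C\mid\bm{X}) - \gamma_m(C)| = O_p(R_n^{-\delta/2}) = o_p(1)$, where $\gamma_m$ denotes the $\mathcal{N}(\bm{0},\bm{I}_m)$ law. Since $T_M(\mu_0;\pi) = ([I]+[II])'([I]+[II])$, the event $\{T_M(\mu_0;\pi)\le s\}$ coincides with $\{[I]\in B\}$ for $B$ the closed Euclidean ball of radius $\sqrt s$ centred at $-[II]$, which is convex; combining $|\prob(T_M(\mu_0;\pi)\le s\mid\bm{X}) - \gamma_m(B)| \le \Delta_n$ with the estimate $\sup_s|\gamma_m(B) - \gamma_m(B_0)| = o_p(1)$ (the $\mathcal{N}(\bm{0},\bm{I}_m)$‑measure of a Euclidean ball changes by $O(\|[II]\|)$ under the $o_p(1)$ translation $[II]$, uniformly in the radius, since the chi distribution has bounded density; $B_0$ the ball of radius $\sqrt s$ at the origin) and $\gamma_m(B_0) = F_{\chi^2_m}(s)$ gives $\sup_s|\prob(T_M(\mu_0;\pi)\le s\mid\bm{X}) - F_{\chi^2_m}(s)| = o_p(1)$. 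For $T_M(\bar{X};\pi) = [I]'[I]$ the same computation with $[II]=0$ delivers the bound $\Delta_n$ outright.

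\smallskip

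\noindent\textit{U‑type statistic.} Write $T_U(\mu_0;\pi) = (mR_n)^{-1/2}\sum_{r=1}^{R_n}\xi_r$ with $\xi_r = (X_{\pi_r(1)}-\mu_0)'\hat{\bm{\Sigma}}^{-1}(X_{\pi_r(2)}-\mu_0)$, i.i.d.\ conditional on $\bm{X}$. Using $\sum_i(X_i-\mu_0)(X_i-\mu_0)' = n\hat{\bm{\Sigma}} + n(\bar{X}-\mu_0)(\bar{X}-\mu_0)'$, elementary identities give: the conditional mean $\mu_\xi := \E[\xi_r\mid\bm{X}] = (\bar{X}-\mu_0)'\hat{\bm{\Sigma}}^{-1}(\bar{X}-\mu_0) - \tfrac{m}{n-1} = O_p(n^{-1})$, so the ``bias'' $\sqrt{R_n/m}\,\mu_\xi = O_p(\sqrt{R_n}/n) = o_p(1)$ by $\sqrt{R_n}/n\to0$; the conditional variance $\sigma_\xi^2 := \mathrm{Var}(\xi_r\mid\bm{X}) = m + o_p(1)$, the leading term coming from $\mathrm{tr}\big((\sum_i Y_iY_i')^2\big) = n^2 m + O_p(n)$ with $Y_i := \hat{\bm{\Sigma}}^{-1/2}(X_i-\mu_0)$, and the remainder from $\sum_i\|Y_i\|_2^4 \le (\max_i\|Y_i\|_2^2)(\sum_i\|Y_i\|_2^2) = O_p(n^{1+2/(2+\delta)}) = o_p(n^2)$ by (c); and the conditional $(2+\delta)$‑th moment $\E[|\xi_r-\mu_\xi|^{2+\delta}\mid\bm{X}] = O_p(1)$, since $|\xi_r| \le \|Y_{\pi_r(1)}\|_2\|Y_{\pi_r(2)}\|_2$ by Cauchy--Schwarz while $n^{-1}\sum_i\|Y_i\|_2^{2+\delta} = O_p(1)$. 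Setting $S_n := (\sqrt{R_n}\,\sigma_\xi)^{-1}\sum_{r=1}^{R_n}(\xi_r-\mu_\xi)$, a univariate Berry--Esseen bound with $(2+\delta)$ moments gives $\sup_t|\prob(S_n\le t\mid\bm{X}) - \Phi(t)| = O_p(R_n^{-\delta/2}) = o_p(1)$. As $T_U(\mu_0;\pi) = (\sigma_\xi/\sqrt m)S_n + \sqrt{R_n/m}\,\mu_\xi$ with $\sigma_\xi/\sqrt m = 1+o_p(1)$ and additive term $o_p(1)$, absorbing this affine perturbation into the bounded, uniformly continuous $\Phi$ gives $\sup_t|\prob(T_U(\mu_0;\pi)\le t\mid\bm{X}) - \Phi(t)| = o_p(1)$. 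The argument for $T_U(\bar{X};\pi)$ is identical with $X_i-\mu_0$ replaced by $X_i-\bar{X}$; there $\sum_i(X_i-\bar{X}) = 0$, which makes $\mu_\xi = -\tfrac{m}{n-1}$ nonrandom and the variance/moment bookkeeping only simpler, so the same conclusion holds.

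\smallskip

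\noindent\textit{Main obstacle.} The genuinely delicate point is uniformity in the realization of $\bm{X}$. A ``soft'' route --- verifying the conditional Lindeberg condition and invoking Pólya's theorem along a.s.-convergent subsequences --- breaks down here: condition (c) supplies only stochastic boundedness of $n^{-1}\sum_i\|X_i\|^{2+\delta}$, not a.s.\ control along subsequences, and the conditional law of the statistic depends on the entire realization rather than merely on a high‑probability bounding event, so one cannot upgrade a Lindeberg‑type CLT into a uniform one without rates. This is why the argument must be routed through \emph{quantitative} (Berry--Esseen) CLT bounds: their ``constants'' are $\bm{X}$‑measurable but $O_p(1)$ by (b)--(c), while the vanishing factor $R_n^{-\delta/2}$ is deterministic, so their product is $o_p(1)$ with no subsequence extraction. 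The remaining care lies in the exact identity $\mathrm{Var}(X_{\pi_r(1)}\mid\bm{X}) = \hat{\bm{\Sigma}}$ --- it is what makes $[I]$ have exactly identity conditional covariance and keeps the U‑type residual‑variance error at $o_p(1)$ --- and in the $\xi_r$ moment bookkeeping, where one must check that the $\max_i$‑type terms are $o_p(n^2)$ using only a $(2+\delta)$‑th moment bound.
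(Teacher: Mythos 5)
Your proof is correct, and it reaches all four conclusions plus the uniformity claim, but by a genuinely different route from the paper. The paper conditions on $\bm{X}$, applies a (soft) Lindeberg CLT to the conditionally i.i.d.\ summands (Theorems \ref{permgen} and \ref{quadstat}), obtains pointwise-in-$t$ convergence in probability of the conditional CDFs, and then upgrades to uniformity via an extension of Pol\'ya's theorem and to unconditional convergence via bounded convergence; it also treats $T_U(\bar{X};\pi)$ indirectly, by showing $T_U(\bar{X};\pi)=T_U(\mu_0;\pi)+o_p(1)$. You instead run everything through quantitative Berry--Esseen bounds (multivariate over convex sets for the mean-type statistic, univariate for the U-type), whose $\bm{X}$-measurable ``constants'' are $O_p(1)$ by (b)--(c) while the factor $R_n^{-\delta/2}$ is deterministic, so the Kolmogorov-distance convergence in probability comes out directly, with an explicit rate and no appeal to Pol\'ya; your treatment of the translation by $[II]$ via the bounded chi density, the exact identities $\mathrm{Var}(X_{\pi_r(1)}\mid\bm{X})=\hat{\bm{\Sigma}}$ and $\tilde T_M(\bar{X};\pi)=[I]$, the computation $\mu_\xi=(\bar{X}-\mu_0)'\hat{\bm{\Sigma}}^{-1}(\bar{X}-\mu_0)-m/(n-1)$, and the $\sum_i\norm{Y_i}_2^4=o_p(n^2)$ bookkeeping (a detail the paper leaves terse in Step 3 of \autoref{quadstat}) are all correct, as is the direct argument for $T_U(\bar{X};\pi)$. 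The one point I would push back on is your ``main obstacle'' paragraph: the soft route does not in fact break down. The conditional Lindeberg ratio is bounded by $\epsilon^{-\delta}R_n^{-\delta/2}$ times an $\bm{X}$-measurable $O_p(1)$ quantity, hence is $o_p(1)$, and the standard subsequence device (every subsequence has a further subsequence along which these $o_p(1)$ quantities converge a.s., whereupon the classical CLT applies realization-by-realization) yields pointwise-in-$t$ convergence in probability of the conditional CDFs; uniformity then follows from the continuity of the limit CDFs via the Pol\'ya-type result the paper cites. So your quantitative route is a valid and somewhat more self-contained alternative that additionally delivers rates, but it is not forced; this misjudgment does not affect the validity of your own argument.
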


\begin{remark}\label{slower}
  While we focus on the standard case of $\sqrt{n}$-consistency, our methods can be adjusted to accommodate other rates of convergence, provided the rate is known. This involves choosing an asymptotically smaller value of $R_n$. To see this, suppose $\bar{X}$ is $n^\delta$-consistent for some $\delta \in (0,1)$, and consider decomposition \eqref{decomp} for the mean-type statistic. Term $[I]$ is still asymptotically normal. Term $[II]$ equals $\sqrt{R_n} \hat{\bm{\Sigma}}^{-1/2} n^{-1} \sum_{i=1}^n (X_i-\mu_0)$. Then given $\hat{\bm{\Sigma}} \plimarrow \bm{\Sigma}$ positive definite, we need
  \begin{equation*}
    \frac{\sqrt{R_n}}{n^\delta} \left( n^\delta \frac{1}{n} \sum_{i=1}^n (X_i-\mu_0) \right) = o_p(1).
  \end{equation*}

  \noindent The required rate condition is therefore $\sqrt{R_n}n^{-\delta} = o(1)$. 
\end{remark}

\begin{remark}
  \autoref{mq} provides limit distributions for $T_M(\bar{X}; \pi)$ and $T_U(\bar{X}; \pi)$, which yield an alternate way of constructing critical values. Define $\tilde\pi_\ell = (\tilde\pi_{r\ell})_{r=1}^{R_n}$ as in \autoref{nonrandom}. Following \S 3.2 of \cite{song_ordering-free_2016}, the ``permutation critical value'' for test \eqref{tests} is the $1-\alpha$ quantile of the ``permutation distribution,'' namely
  \begin{equation*}
    c_{L,1-\alpha} = \inf\left\{ c'>0\colon \frac{1}{L} \sum_{\ell=1}^L \ind\left\{ T_U(\bar{X};\tilde\pi_\ell) > c' \right\} \leq \alpha \right\} 
  \end{equation*}

  \noindent for the U-type statistic. Permutation critical values for the mean-type statistic are obtained analogously, replacing $T_U(\bar{X};\tilde\pi_\ell)$ with $T_M(\bar{X}; \tilde\pi_\ell)$ in the expression.
\end{remark}

\begin{remark}\label{power}
  The generality of our procedures comes at the cost of having power against fewer sequences of alternatives. If the econometrician could consistently estimate the asymptotic variance of $\bar{X}$, then the usual trinity of tests would have power against local alternatives $\mu_n = \mu_0 + h/\sqrt{n}$. In contrast, the test in \eqref{tests} using the mean-type statistic only has nontrivial asymptotic power against alternatives $\mu_n = h/\alpha_n$, where $\alpha_n \rightarrow \infty$ but $\alpha_n R_n^{-1/2} \rightarrow c \in [0,\infty)$. This is immediate from the rate of convergence and bias discussed in \autoref{smodel}. For the test using the U-type statistic, we have instead $\alpha_n R_n^{-1/4} \rightarrow c \in [0,\infty)$, following the argument in Theorem 3.3 of \cite{song_ordering-free_2016}. Due to the rate conditions on $R_n$, this implies that tests using our resampled statistics have lower power than conventional tests, which we interpret as the cost of dependence-robustness. Note that $R_n$ is chosen differently for the two statistics and can grow faster with $n$ for the U-type statistic, which is why it has better power properties.
\end{remark}

\begin{remark}\label{firststage}
  \autoref{mq} allows $X_i$ to depend on a ``first-stage'' estimator, which is important for many of the applications in \autoref{stapp}. Consider the setup for asymptotically linear estimators \eqref{ale2}. The following are primitive conditions for \autoref{mq}:
  \begin{enumerate}[(i)]
    \item $n^{-1/2} \sum_{i=1}^n \psi(Z_i;\beta_0,\theta_0)$ and $\sqrt{n}(\hat\theta - \theta_0)$ are $O_p(1)$.
    \item There exists $\hat\Sv$ consistent for $\Sv = \lim_{n\rightarrow\infty} n^{-1} \sum_{i=1}^n \E[\psi(Z_i;\beta_0,\theta_0) \psi(Z_i;\beta_0,\theta_0)']$, and the latter is positive-definite. 
    \item $n^{-1} \sum_{i=1}^n \norm{\psi(Z_i;\beta_0,\theta_0)}^{2+\delta} = O_p(1)$ for some $\delta>0$.
    \item $\sup_{\theta\in\Theta} \norm{n^{-1} \sum_{i=1}^n (\nabla_\theta \psi(Z_i;\beta_0,\theta) - \E[\nabla_\theta \psi(Z_i;\beta_0,\theta)])} = o_p(1)$.
  \end{enumerate}
\end{remark}

%----------------------
\subsection{Moment Inequality Test}\label{mils}
%----------------------

We next state formal results for test \eqref{mitest}. Let $\lambda_\text{min}(M)$ denote the smallest eigenvalue of a matrix $M$ and $\norm{M} = \max_{i,j} \abs{M_{ij}}$. Define $\mu_0(\prob) = \E_\prob[X_1]$, where $\E_\prob[\cdot]$ denotes the expectation under the data-generating process (DGP) $\prob$ (formally a probability measure). Let $\mu_{0k}(\prob)$ be the $k$th component of $\mu_0(\prob)$. 

\begin{theorem}\label{RSMIreal}
  Let $\mathcal{P}_0$ be the set of DGPs such that, for some $\delta,L,U \in (0,\infty)$ and any sequence $\{\prob_n\}_{n\in\mathbb{N}} \subseteq \mathcal{P}_0$, the following conditions hold.
  \begin{enumerate}[(a)]
    \item $n^{-1/2}\sum_{i=1}^n (X_i - \mu_0(\prob_n)) = O_{\prob_n}(1)$. 
    \item There exists $\bm{\Sigma}$ such that $\hat{\bm{\Sigma}} \plimarrow \bm{\Sigma}$, $\norm{\bm{\Sigma}}<U$, and $\lambda_\text{min}(\bm{\Sigma}) > L$.
    \item $n^{-1} \sum_{i=1}^n \norm{X_i}^{2+\delta} = O_{\prob_n}(1)$. 
    \item $\mu_0(\prob_n) \leq 0$ for all $n$ (null hypothesis).
  \end{enumerate}

  \noindent If $R_n\rightarrow\infty$ and $\sqrt{R_n}/n = o(1)$, then $\sup_{\prob\in\mathcal{P}_0} \E_\prob[\phi_n] \rightarrow \alpha$ for $\phi_n$ in \eqref{mitest}.
\end{theorem}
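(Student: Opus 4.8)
The plan is to reduce the uniform statement to sequences and then exploit the identity \eqref{algebra}. It suffices to show $\E_{\prob_n}[\phi_n]\to\alpha$ for \emph{every} sequence $\{\prob_n\}_n\subseteq\mathcal{P}_0$: then $\liminf_n\sup_{\prob\in\mathcal{P}_0}\E_\prob[\phi_n]\geq\alpha$ follows by evaluating along any fixed sequence in $\mathcal{P}_0$, and $\limsup_n\sup_{\prob\in\mathcal{P}_0}\E_\prob[\phi_n]\leq\alpha$ follows by the usual near-maximizer argument (if it failed, pick along a subsequence realizing the $\limsup$ a DGP $\prob_n$ within $1/n$ of the supremum, pad to a full sequence in $\mathcal{P}_0$, and contradict). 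To prove the per-sequence claim it is enough to show that every subsequence admits a further subsequence on which $\E_{\prob_n}[\phi_n]\to\alpha$; along such a subsequence, since $\{M:\norm{M}\leq U,\ \lambda_{\min}(M)\geq L\}$ is compact, we may extract once more so that $\hat{\bm{\Sigma}}\plimarrow\bm{\Sigma}_\infty$ for some $\bm{\Sigma}_\infty$ with $\lambda_{\min}(\bm{\Sigma}_\infty)\geq L$.

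Step 1 shows the correction terms are negligible under the null. By \eqref{algebra}, $Q_n(\pi)=\max_{1\leq k\leq m}\{T_{U,k}(\bar X_k;\pi)+\hat\lambda_k\ind\{\bar X_k\geq0\}\}$. A short rearrangement gives $\hat\lambda_k=\hat{\bm{\Sigma}}_{kk}^{-1}\bigl(\bar X_k M_k+m^{-1/2}\sqrt{R_n}\,\bar X_k^2\bigr)$, where $M_k=(mR_n)^{-1/2}\sum_{r=1}^{R_n}\bigl[(X_{\pi_r(1),k}-\bar X_k)+(X_{\pi_r(2),k}-\bar X_k)\bigr]$ has conditional-on-$\bm{X}$ mean zero and conditional variance $O_{\prob_n}(1)$ by (b), so $M_k=O_{\prob_n}(1)$. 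The key point is that on $\{\bar X_k\geq0\}$ the null $\mu_{0k}(\prob_n)\leq0$ forces $0\leq\bar X_k\leq\bar X_k-\mu_{0k}(\prob_n)$, whence $\abs{\bar X_k}\ind\{\bar X_k\geq0\}\leq\abs{\bar X_k-\mu_{0k}(\prob_n)}=O_{\prob_n}(n^{-1/2})$ by (a). With $\hat{\bm{\Sigma}}_{kk}^{-1}=O_{\prob_n}(1)$ this gives $\hat\lambda_k\ind\{\bar X_k\geq0\}=O_{\prob_n}(n^{-1/2})+O_{\prob_n}(\sqrt{R_n}/n)=o_{\prob_n}(1)$ since $\sqrt{R_n}/n\to0$; and since $(a_1,\dots,a_m)\mapsto\max_k a_k$ is $1$-Lipschitz, $\abs{Q_n(\pi)-\tilde Q_n(\pi)}\leq\max_k\abs{\hat\lambda_k\ind\{\bar X_k\geq0\}}=o_{\prob_n}(1)$, i.e.\ $Q_n(\pi)=\tilde Q_n(\pi)+o_{\prob_n}(1)$.

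Step 2 identifies the null limit law and the critical value. Extending \autoref{mq} from a scalar statistic to the joint vector, one shows that conditionally on $\bm{X}$, $(T_{U,1}(\bar X_1;\pi),\dots,T_{U,m}(\bar X_m;\pi))$ converges in distribution to $\mathcal{N}(\bm{0},\bm{\rho})$ with $\rho_{k\ell}=\lim\hat{\bm{\Sigma}}_{k\ell}^2/(\hat{\bm{\Sigma}}_{kk}\hat{\bm{\Sigma}}_{\ell\ell})=(\bm{\Sigma}_{\infty})_{k\ell}^2/((\bm{\Sigma}_\infty)_{kk}(\bm{\Sigma}_\infty)_{\ell\ell})$, and moreover that the conditional CDF of $\tilde Q_n(\pi)=\max_k T_{U,k}(\bar X_k;\pi)$ converges uniformly in probability to the CDF $F_\infty$ of $\max_k G_k$, $G\sim\mathcal{N}(\bm{0},\bm{\rho})$. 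As $F_\infty$ is continuous and strictly increasing on $\R$, its $(1-\alpha)$-quantile $q^\ast:=F_\infty^{-1}(1-\alpha)$ is well defined and $c_{1-\alpha}\plimarrow q^\ast$. Combining with Step 1 and a Slutsky-plus-continuity argument ($F_\infty$ has no atom at $q^\ast$), $\E_{\prob_n}[\phi_n]=\prob_n(Q_n(\pi)>c_{1-\alpha})\to\prob(\max_k G_k>q^\ast)=\alpha$, the desired per-sequence limit along the extracted subsequence. Note that $\tilde Q_n$ recenters at $\bar X_k$, so its limit does not depend on $\mu_0(\prob_n)$, which is why the asymptotic size equals, rather than merely is bounded by, $\alpha$.

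I expect the main obstacle to be Step 2: \autoref{mq} gives only the one-dimensional marginals, so the limit law of $\tilde Q_n=\max_k T_{U,k}$ requires a genuinely \emph{joint} conditional CLT with uniform convergence of the joint conditional CDF, including identification of the limiting correlations $(\bm{\Sigma}_{k\ell})^2/(\bm{\Sigma}_{kk}\bm{\Sigma}_{\ell\ell})$ arising from the U-statistic structure; one must then pass from the \emph{random} conditional quantile $c_{1-\alpha}$ to the \emph{deterministic} $q^\ast$ uniformly over $\mathcal{P}_0$, which is precisely why the subsequence device and the compactness built into (b) are used, since $q^\ast$ depends on the DGP through $\bm{\Sigma}$. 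Step 1 and the reduction to sequences are comparatively routine, the former once one sees that on $\{\bar X_k\geq0\}$ the null pins $\bar X_k$ at order $n^{-1/2}$.
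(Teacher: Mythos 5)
Your proof is correct in substance, but it follows a genuinely different route from the paper's. The paper proves the theorem by contradiction, reducing the sup to a sequence $\{\prob_n\}$ and invoking part (a) of \autoref{RSMI}, whose proof expands $\hat\lambda_k$ around $\mu_{0k}(\prob_n)$ as in \eqref{r0u923jw}--\eqref{inputsplit}, uses \eqref{RXfact}, and splits into the cases $\delta_k^*=0$ and $\delta_k^*<0$ for the drifting limits $\delta_k^*=\lim R_n^{1/4}\mu_{0k}(\prob_n)$ (whose existence along the contradicting sequence is implicitly handled by a subsequence extraction the paper does not spell out); exactness of the sup is then obtained by evaluating along one fixed null DGP. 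You instead prove the per-sequence size limit directly for \emph{every} null sequence via the elementary observation that on $\{\bar X_k\geq 0\}$ the null pins $0\leq\bar X_k\leq\bar X_k-\mu_{0k}(\prob_n)=O_{\prob_n}(n^{-1/2})$, so that $\hat\lambda_k\ind\{\bar X_k\geq0\}=O_{\prob_n}(n^{-1/2})+O_{\prob_n}(\sqrt{R_n}/n)=o_{\prob_n}(1)$ after your (correct) rearrangement of $\hat\lambda_k$; this avoids the $\delta_k^*$ bookkeeping and the case analysis entirely, and uniformity plus exactness follow from the standard near-maximizer reduction together with the fact that $\tilde Q_n$ is recentered at $\bar X_k$ and hence has a pivotal-type limit. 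What the paper's longer route buys is \autoref{RSMI}(b)--(c), i.e.\ the consistency and local-power statements, which your size-only shortcut does not address but which the theorem does not need. The remaining load-bearing step is the same in both arguments and is asserted rather than proved in the paper as well: the joint conditional CLT for $(T_{U,k}(\bar X_k;\pi))_{k=1}^m$ with uniform convergence of the conditional CDF (the paper's ``minor extension of \autoref{quadstat}''), from which convergence of the permutation critical value $c_{1-\alpha}$ to the deterministic quantile and the Slutsky step follow; you flag this honestly, and your stated limiting correlations $\bm{\Sigma}_{k\ell}^2/(\bm{\Sigma}_{kk}\bm{\Sigma}_{\ell\ell})$ are the right ones (up to the paper's own ambiguity about whether $T_{U,k}$ carries the $1/\sqrt{m}$ normalization, which is needed for \eqref{algebra} to hold exactly and affects only the common scaling of $Q_n$ and $c_{1-\alpha}$, not the validity of the test or the argument).
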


\noindent This shows that the test uniformly controls size. The theorem follows quite directly from the next lemma, which also provides results on the power of the test. 

\begin{lemma}\label{RSMI}
  Let $\{\prob_n\}_{n\in\mathbb{N}} \subseteq \mathcal{P}_0$ (defined in \autoref{RSMIreal}) and $\delta_k^* = \lim_{n\rightarrow \infty} R_n^{1/4} \mu_{0k}(\prob_n) \in \R \cup \{-\infty,\infty\}$. Suppose $R_n\rightarrow\infty$ and $\sqrt{R_n}/n = o(1)$.
  \begin{enumerate}[(a)]
    \item If $\max_k \delta_k^* \leq 0$ (null / ``local-to-null'' case), then $\E_{\prob_n}[\phi_n] \rightarrow \alpha$.
    \item If $\max_k \delta_k^* = \infty$ (fixed alternative case), then $\E_{\prob_n}[\phi_n] \rightarrow 1$.
    \item If $\max_k \delta_k^* \in (0,\infty)$ (local alternative case), then $\E_{\prob_n}[\phi_n] \rightarrow \beta > \alpha$.
  \end{enumerate}
\end{lemma}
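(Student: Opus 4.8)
The plan is to work from the identity \eqref{algebra}, $Q_n(\pi) = \max_{1\le k\le m}\{T_{U,k}(\bar X_k;\pi) + \hat\lambda_k\ind\{\bar X_k\ge 0\}\}$, and to compare $Q_n(\pi)$ with $\tilde Q_n(\pi)$ from \eqref{tildeQn}, whose conditional-on-$\bm X$ $(1-\alpha)$-quantile $c_{1-\alpha}$ is the critical value in \eqref{mitest}. The first ingredient is a joint version of the uniform conditional CLT of \autoref{mq}: conditional on $\bm X$, the vector $(T_{U,1}(\bar X_1;\pi),\dots,T_{U,m}(\bar X_m;\pi))$ converges in distribution in probability to $\mathcal N(\bm 0,\bm\Omega)$ for some $\bm\Omega$ with unit diagonal, and the conditional CDF of $\tilde Q_n(\pi) = \max_k T_{U,k}(\bar X_k;\pi)$ converges uniformly in probability to that of $\tilde Q_\infty := \max_k Z_k$, $(Z_k)\sim\mathcal N(\bm 0,\bm\Omega)$. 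I would obtain this by the Cram\'er--Wold device inside the proof of \autoref{mq}: for any $t\in\R^m$, $\sum_k t_k T_{U,k}(\bar X_k;\pi) = R_n^{-1/2}\sum_{r=1}^{R_n}\xi_r$ with $\xi_r = \sum_k t_k\hat{\bm\Sigma}_{kk}^{-1}(X_{\pi_r(1),k}-\bar X_k)(X_{\pi_r(2),k}-\bar X_k)$, which are, conditional on $\bm X$, i.i.d.\ across $r$ up to the $O(1/n)$ distortion from sampling without replacement, with conditional mean of order $1/n$ (so the centering bias is $O_{\prob_n}(\sqrt{R_n}/n)=o_{\prob_n}(1)$ under $\sqrt{R_n}/n=o(1)$) and conditional variance $O_{\prob_n}(1)$ by condition (c); a conditional Lyapunov CLT then applies, and P\'olya's theorem upgrades it to uniform CDF convergence because $\tilde Q_\infty$ has a continuous CDF that is strictly increasing near its $(1-\alpha)$-quantile $c_\infty$. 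Hence $c_{1-\alpha}\plimarrow c_\infty$ and, integrating out $\bm X$, $\tilde Q_n(\pi)\dlimarrow\tilde Q_\infty$ under $\prob_n$.

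The technical heart, and the step I expect to be the main obstacle, is the first-order analysis of $\hat\lambda_k$ and $\bar X_k$. Writing $S_{n,k} := (mR_n)^{-1/2}\sum_{r=1}^{R_n}(X_{\pi_r(1),k}+X_{\pi_r(2),k})$, a conditional mean/variance computation gives $\E[S_{n,k}\mid\bm X] = 2(R_n/m)^{1/2}\bar X_k$ and $\var(S_{n,k}\mid\bm X)=O_{\prob_n}(1)$, so $S_{n,k} = 2(R_n/m)^{1/2}\bar X_k + O_{\prob_n}(1)$. Substituting into the definition of $\hat\lambda_k$, the two terms of order $\sqrt{R_n}$ partially cancel and leave $\hat\lambda_k = m^{-1/2}\hat{\bm\Sigma}_{kk}^{-1}\sqrt{R_n}\,\bar X_k^2 + \bar X_k\,O_{\prob_n}(1)$. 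Using $\bar X_k = \mu_{0k}(\prob_n) + O_{\prob_n}(n^{-1/2})$ from condition (a) and $R_n^{1/4}/\sqrt n = (\sqrt{R_n}/n)^{1/2}\to 0$, expanding $\sqrt{R_n}\,\bar X_k^2$ yields the two estimates invoked in the heuristic discussion of the test: $\hat\lambda_k = (R_n^{1/4}\mu_{0k}(\prob_n))^2 m^{-1/2}\hat{\bm\Sigma}_{kk}^{-1} + \mu_{0k}(\prob_n)\,O_{\prob_n}(1) + o_{\prob_n}(1)$ and $R_n^{1/4}\bar X_k = R_n^{1/4}\mu_{0k}(\prob_n) + o_{\prob_n}(1)$. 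The delicate point is to verify that the residuals are negligible relative to the leading term $(R_n^{1/4}\mu_{0k}(\prob_n))^2$ in every regime for $\delta_k^* = \lim_n R_n^{1/4}\mu_{0k}(\prob_n)$, including $\delta_k^*=\pm\infty$; this uses $R_n = o(n^2)$ (so $\mu_{0k}(\prob_n)\gg n^{-1/2}$ whenever $R_n^{1/4}\mu_{0k}(\prob_n)$ diverges) and is exactly where $\sqrt{R_n}/n=o(1)$ enters.

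Combining these with $\hat{\bm\Sigma}_{kk}\plimarrow\bm\Sigma_{kk}\in(L,U)$ (condition (b)) gives the behaviour of each $\hat\lambda_k\ind\{\bar X_k\ge 0\}=\hat\lambda_k\ind\{R_n^{1/4}\bar X_k\ge 0\}$: if $\delta_k^*\le 0$ then $\ind\{R_n^{1/4}\bar X_k\ge 0\}\plimarrow 0$ while $\hat\lambda_k$ is $O_{\prob_n}(1)$ (and $\plimarrow 0$ when $\delta_k^*=0$, since then $\mu_{0k}(\prob_n)\to 0$), so $\hat\lambda_k\ind\{\bar X_k\ge 0\}=o_{\prob_n}(1)$ (the product is zero off an event of probability $\to0$); if $\delta_k^*\in(0,\infty)$ then $\hat\lambda_k\ind\{\bar X_k\ge 0\}\plimarrow\ell_k := (\delta_k^*)^2 m^{-1/2}\bm\Sigma_{kk}^{-1}>0$; and if $\delta_k^*=\infty$ then $\hat\lambda_k\ind\{\bar X_k\ge 0\}\plimarrow\infty$. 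For part (a), $\max_k\delta_k^*\le 0$ makes all of these $o_{\prob_n}(1)$, so $Q_n(\pi)=\tilde Q_n(\pi)+o_{\prob_n}(1)$; with $\tilde Q_n(\pi)\dlimarrow\tilde Q_\infty$, $c_{1-\alpha}\plimarrow c_\infty$, and continuity of the CDF of $\tilde Q_\infty$ at $c_\infty$, Slutsky gives $\E_{\prob_n}[\phi_n]=\prob_n(Q_n(\pi)>c_{1-\alpha})\to\prob(\tilde Q_\infty>c_\infty)=\alpha$. For part (b), pick $k_0$ with $\delta_{k_0}^*=\infty$; then $Q_n(\pi)\ge T_{U,k_0}(\bar X_{k_0};\pi)+\hat\lambda_{k_0}\ind\{\bar X_{k_0}\ge 0\}$, the sum of an $O_{\prob_n}(1)$ term and one diverging to $+\infty$, while $c_{1-\alpha}=O_{\prob_n}(1)$, so $Q_n(\pi)-c_{1-\alpha}\plimarrow\infty$ and $\E_{\prob_n}[\phi_n]\to 1$.

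For part (c), put $\ell_k := (\delta_k^*)^2 m^{-1/2}\bm\Sigma_{kk}^{-1}$ when $\delta_k^*\in(0,\infty)$ and $\ell_k := 0$ when $\delta_k^*\le 0$ (all $\delta_k^*$ finite here). Since $\hat\lambda_k\ind\{\bar X_k\ge 0\}\plimarrow\ell_k$ for every $k$, Slutsky gives $Q_n(\pi)=\max_k\{T_{U,k}(\bar X_k;\pi)+\ell_k\}+o_{\prob_n}(1)\dlimarrow\max_k\{Z_k+\ell_k\}$, whence $\E_{\prob_n}[\phi_n]\to\prob(\max_k\{Z_k+\ell_k\}>c_\infty)=:\beta$. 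As $\max_k\delta_k^*\in(0,\infty)$, some $\ell_{k_0}>0$; since all $\ell_k\ge 0$ and $c_\infty$ lies in the interior of the support of $\max_k Z_k$, the event $\{Z_j\le c_\infty\ \forall j\}\cap\{Z_{k_0}>c_\infty-\ell_{k_0}\}$ has positive probability and is contained in $\{\max_k Z_k\le c_\infty<\max_k(Z_k+\ell_k)\}$, so $\beta>\prob(\max_k Z_k>c_\infty)=\alpha$. This last strict inequality needs a short extra argument when $\bm\Omega$ is degenerate (work inside the support of $(Z_k)$), but apart from that and the residual bookkeeping in the second step the argument is routine. (\autoref{RSMIreal} then follows by a standard subsequence argument: condition (d) forces $\max_k\delta_k^*\le 0$ along any subsequence of DGPs in $\mathcal P_0$ once one passes to a further subsequence along which the $\delta_k^*$ exist, and part (a) applies.)
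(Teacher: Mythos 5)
Your proposal is correct and follows essentially the same route as the paper's proof: the same expansion of $\hat\lambda_k$ and of $R_n^{1/4}\bar{X}_k$ (the facts \eqref{inputsplit} and \eqref{RXfact}), the same case analysis on $\delta_k^*$ via \eqref{algebra}, and the same comparison of $Q_n(\pi)$ with the quantile of $\tilde Q_n(\pi)$. The only differences are that you spell out details the paper leaves implicit — the Cram\'er--Wold ``minor extension'' of \autoref{quadstat} to the joint limit of $(T_{U,k}(\bar X_k;\pi))_k$, and an explicit positive-probability argument for $\beta>\alpha$ in part (c) — which is consistent with, not a departure from, the paper's argument.
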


\noindent Part (a) shows the test asymptotically controls size under any sequence of null DGPs. Parts (b) and (c) describe the test's power, with (c) showing that the test has power against local alternatives that vanish no faster than rate $R_n^{-1/4}$. Since size control requires $\sqrt{R_n}/n \rightarrow 0$, the test does not have power against $\sqrt{n}$ local alternatives.

\begin{remark}
  \autoref{power} discusses a sort of bias-variance trade-off for choosing $R_n$ in the equality-testing case. A similar trade-off occurs here since we use the same U-type statistic. Consider the case $m=1$ and drop the subscript $k$. By \eqref{r0u923jw} in the proof of \autoref{RSMI}, $\hat\lambda$ depends on the term $\sqrt{R_n}\hat{\bm{\Sigma}}^{-1/2}(\bar{X}-\mu_0(\prob_n))^2$. If $\sqrt{R_n}/n \rightarrow 0$ as required, then this term vanishes. However, if $R_n$ were too large, say equal to $n^2$, then the term would instead be asymptotically chi-square, and our test would have incorrect size. Validity of our test therefore requires $\sqrt{R_n}/n \rightarrow 0$ to eliminate a ``bias'' term. On the other hand, the rate of convergence of the test is $R_n^{-1/4}$, as shown in \autoref{RSMI}, which reflects the same trade-off for equality testing.
\end{remark}

\begin{remark}
  Consider the conventional moment inequalities setting in which $\bm{X}$ is i.i.d. For simplicity, suppose $m=1$, and consider the test statistic $n^{-1/2} \hat{\bm{\Sigma}}^{-1/2} \sum_{i=1}^n X_i$. The well-known difficulty with constructing critical values for this statistic is that while
  \begin{equation*}
    \frac{1}{\sqrt{n}} \sum_{i=1}^n \hat{\bm{\Sigma}}^{-1/2}(X_i-\mu_0(\prob_n)) \dlimarrow \mathcal{N}(\bm{0},\bm{I}_m), 
  \end{equation*}

  \noindent it is impossible to consistently estimate $\sqrt{n}\bm{\Sigma}^{-1/2}\mu_0(\prob_n)$. Much of the moment-inequalities literature boils down to finding clever ways to conservatively bound this nuisance parameter from above \citep{canay2018practical}. In contrast, in our setting, the nuisance parameter is \eqref{inputsplit}, whose leading term $\sqrt{R_n}(\bm{\Sigma}^{-1/2}\mu_0(\prob_n))^2$, for example, can be consistently estimated by $\sqrt{R_n}(\hat{\bm{\Sigma}}^{-1/2}\bar{X})^2$ since $\sqrt{R_n}/n\rightarrow 0$ \eqref{RXfact}. This is why our test is asymptotically exact.
\end{remark}

%----------------------------------------------------------------------
\section{Monte Carlo}\label{ssims}
%----------------------------------------------------------------------

This section presents results from four simulation studies, each corresponding to one of the applications in \autoref{smodel}. We use asymptotic critical values to implement the tests and multiple values of $R_n$ to assess the robustness of the methods to the tuning parameter. The results are broadly summarized as follows. The size is largely close to the target level of 5 percent across all designs with weak dependence, more so for larger sample sizes. Power can be low in small samples, as expected from the convergence rates discussed in \autoref{power}. The U-type statistic has significantly better power properties than the mean-type, which leads us to recommend the former. Finally, results are similar across values of $R_n$.

\bigskip

\noindent {\bf Cluster Dependence.} Let $c$ index cities, $f$ index families, and $i$ index individuals. We generate outcomes according to the random effects model
\begin{equation*}
  Y_{ifc} = \theta_0 + \alpha_f + \varepsilon_{ifc}, 
\end{equation*}

\noindent where $\alpha_f \stackrel{iid}\sim \mathcal{N}(0,1)$ and $\varepsilon_{ifc} \stackrel{iid}\sim \mathcal{N}(0,1)$, the two mutually independent. The correct level of clustering is at the family level, and the true value of $\theta_0$ is one. Let $n_c$, $n_f$, and $n_i$ be the number of cities, families, and individuals, respectively, and $N = (n_c, n_f, n_i)$. Families have equal numbers of individuals and cities equal numbers of families. 

We present results for resampled statistics and compare them to $t$-tests using clustered standard errors for each level of clustering. \autoref{CSE} displays simulation results for the size and power of our tests, computed using 6000 simulations. The first two rows display rejection percentages for size (testing $H_0\colon \theta_0=1$) and power (testing $H_0\colon \theta_0=1.5$), respectively. To show the robustness of our test, we display results for five different values of $R_n$. For the U-type statistic, the columns correspond to $R_n = R_n^U \cdot \epsilon$ for $\epsilon \in \{0.6,0.8,1,1.2,1.4\}$ and $R_n^U$ defined in \eqref{R}. Thus, the middle of the five columns for both sample sizes corresponds to our suggested choice $R_n^U$. We do the same for the M-type statistic, except we use $R_n = R_n^M\cdot\epsilon$.  Finally, \autoref{CSE2} displays analogous results for $t$-tests with clustered standard errors. The columns display the level of clustering with $c$ for city, $f$ for family, $i$ for individual.

The results show that the $t$-test overrejects when clustering at too coarse a level and the number of clusters is small (clustering at the city level). It also overrejects when clustering at too fine a level (clustering at the individual level) because this assumes more independence in the data than is warranted. In contrast, tests using resampled statistics properly control size. On the other hand, the $t$-test is clearly more powerful. U-type statistics show a significant power advantage over mean-type statistics. Our results are also similar across values of $R_n$.

\begin{table}[ht]
\centering
\caption{Weak Cluster Dependence: Our Tests}
\begin{threeparttable}
\resizebox{\columnwidth}{!}{ 
\begin{tabular}{llrrrrr|rrrrr}
\toprule
  & $N$   & \multicolumn{5}{c}{$(20,100,200)$} & \multicolumn{5}{c}{$(20,500,1000)$} \\
\cmidrule{2-12}
  & Size &          5.37 &   6.02 &   6.30 &   6.33 &   6.73 &           5.22 &    5.18 &    5.68 &    5.77 &    5.42 \\
M & Power &         17.55 &  22.62 &  27.88 &  32.52 &  36.70 &          33.35 &   44.35 &   51.63 &   58.77 &   65.78 \\
  & $R_n$ &          8 &  11 &  14 &  17 &  20 &          19 &   26 &   32 &   38 &   45 \\
\cmidrule{2-12}
  & Size &          5.68 &   6.63 &   6.10 &   6.32 &   6.30 &           5.53 &    5.17 &    5.32 &    5.58 &    5.67 \\
U & Power &         61.22 &  68.07 &  73.05 &  76.32 &  79.77 &          99.78 &   99.95 &   99.97 &  100 &  100 \\
  & $R_n$ &        278 & 371 & 464 & 557 & 650 &        2381 & 3175 & 3969 & 4763 & 5557 \\
\bottomrule
\end{tabular}}
\label{CSE}
\begin{tablenotes}[para,flushleft]
  \footnotesize Averages over 6000 simulations. $N = (n_c,n_f,n_i)$. $M =$ mean-type test, $U =$ U-type test.
\end{tablenotes}
\end{threeparttable}
\end{table}

\begin{table}[ht]
\centering
\caption{Weak Cluster Dependence: $t$-Tests}
\begin{threeparttable}
\begin{tabular}{lrrrrrr}
\toprule
$N$   & \multicolumn{3}{c}{$(20,100,200)$} & \multicolumn{3}{c}{$(20,500,1000)$} \\
\cmidrule{2-7}
Cluster Lvl &    $c$ &    $f$ &    $i$ &     $c$ &     $f$ &     $i$ \\
Size  &  6.93 &  5.27 & 11.37 &   7.02 &   5.18 &  11.22 \\
Power & 98.23 & 98.35 & 99.22 & 100 & 100 & 100 \\
\bottomrule
\end{tabular}
\label{CSE2}
\begin{tablenotes}[para,flushleft]
  \footnotesize Averages over 6000 simulations. $N = (n_c,n_f,n_i)$. 
\end{tablenotes}
\end{threeparttable}
\end{table}

These designs feature weak cluster dependence (required by all existing methods in the literature) since clustering is at the family level and the number of families is relatively large. To see how our methods break down under strong dependence, we next modify the design so that clustering is at the city level and the number of cities is small. For $N=(30,600,1200)$ (only 30 cities), the type I error of the mean-type statistic now ranges from 9.13 to 17.83 percent across the values of $R_n$, while that of the U-type statistic ranges from 25.70 to 33.58 percent. For $N=(10,600,1200)$, the error of the mean-type statistic ranges from 17.83 to 29.45 percent, and that of the U-type statistic ranges from 48.93 to 56.55 percent. 

\bigskip

\noindent {\bf Network Statistics.} We generate a network according to a strategic model of network formation, following the simulation design of \cite{leung2017}. There are $n$ nodes, and each node $i$ is endowed with a type $(X_i,Z_i)$, where $Z_i \stackrel{iid}\sim \text{Ber}(0.5)$ and $X_i \stackrel{iid}\sim U([0,1]^2)$, the two mutually independent. Let $\rho$ be the function such that $\rho(\delta)=0$ if $\delta \leq 1$ and equal to $\infty$ otherwise. Potential links satisfy
\begin{equation*}
  A_{ij} = \ind\left\{ \theta_1 + (Z_i+Z_j) \theta_2 + \max_k G_{ik}G_{jk} \theta_3 - \rho(r_n^{-1}\norm{X_i-X_j}) + \zeta_{ij} > 0 \right\}, 
\end{equation*}

\noindent where $\norm{\cdot}$ is the Euclidean norm on $\R^d$ and $\zeta_{ij} \stackrel{iid}\sim \mathcal{N}(0,\theta_4^2)$ is independent of types. We set $\theta = (-1, 0.25, 0.25, 1)$ and $r_n = (3.6/n)^{1/2}$ and use the selection mechanism in the design of \cite{leung2017}; see his paper for details.

We are interested in two statistics that are functions of the network, the average clustering coefficient (defined in \autoref{sns}) and the average degree. \cite{leung2017normal} prove $\sqrt{n}$-consistency of the sample statistics for their population analogs. Let $\theta_0$ be the expected value of the network statistic. Tables \ref{tnetstatscc} and \ref{tnetstatsdeg} in the appendix display rejection percentages for average clustering and degree for two nulls. The first is that $\theta_0$ equals its true value, which estimates the size. The second is that $\theta_0$ equals the true value plus the number indicated in the table, which estimates power. We use 6000 simulation draws each to simulate $\theta_0$ and rejection percentages.

For this design, our tests exhibit substantial size distortion in small samples ($n=100$), but the size tends toward the nominal level as $n$ grows. For the average degree, there is still some size distortion at larger samples ($n=1000$) for the U-type statistic, although this is less than 2 percentage points above the nominal level. The U-type statistic is significantly more powerful than the mean-type statistic, with rejection percentages sometimes more than twice as large.

\bigskip

\noindent {\bf Treatment Spillovers.} Consider the setup in \autoref{stes}. We assign units to treatment with probability 0.3 and draw the network from the same model used for the network statistics above. The outcome model is $Y_i = \beta_1 + \beta_2 D_i + \beta_3 T_i + \beta_4 \gamma_i + \varepsilon_i$. For $\nu_j \stackrel{iid}\sim \mathcal{N}(0,1)$ independent of treatments, we set $\varepsilon_i = \nu_i + \sum_j A_{ij}\nu_j/\sum_j A_{ij}$, which represents exogenous peer effects in unobservables and generates network autocorrelation in the errors. We set $(\beta_1,\beta_2,\beta_3,\beta_4) = (1, 0.5, -1, 0.5)$.

We consider a linear regression estimator of $Y_i$ on $(1, D_i, T_i, \gamma_i)$ and test two hypotheses, $\beta_3=-1$ to estimate the size and $\beta_3=-1.8$ to estimate the power. \autoref{tspill} in the appendix displays rejection percentages computed using 6000 simulations. As in the network statistics application, we display multiple values of $R_n$ corresponding to $R_n = R_n^U\cdot \epsilon$ for the U-type statistic and $R_n = R_n^M\cdot\epsilon$ for the mean-type statistics, for the same values of $\epsilon$ above. In this design, our tests control size well across all sample sizes, and the U-type statistic is substantially more powerful.

\bigskip

\noindent {\bf Power Laws.} We implement the test in \autoref{stpl}, which uses the U-type statistic. We set $R_n = R_n^U$. Following the notation in that section, we draw data $\{Z_i\}_{i=1}^n$ i.i.d.\ from either an $\text{Exp}(0.5)$ distribution or a power law distribution with exponent 2. The lower support point for both is set at 1. \autoref{tPL} in the appendix reports rejection percentages from 6000 simulations under both alternatives (exponential and power law). Row ``LL'' displays the average normalized log-likelihood ratio, ``Favor Exp'' the percentage of simulations in which we reject in favor of the null distribution, and ``Favor PL'' the percentage of simulations in which we reject in favor of the power law. The power is around 55--60 percent for $n=100$ and 86--97 percent for $n=500$.

%----------------------------------------------------------------------
\section{Conclusion}\label{sconclude}
%----------------------------------------------------------------------

We develop tests for moment equalities and inequalities that are robust to general forms of weak dependence. The tests compare a resampled test statistic to an asymptotic critical value, in contrast to conventional resampling procedures, which compare a test statistic constructed using the original dataset to a resampled critical value. The validity of conventional procedures requires resampling in a way that mimics the dependence structure of the data, which in turn requires knowledge about the type of dependence. In contrast, the procedures we study are implemented the same way regardless of the dependence structure. We show our methods are asymptotically valid under the weak requirement that the target parameter can be estimated at a $\sqrt{n}$ rate. To illustrate the broad applicability of our procedure, we discuss four applications, including regression with unknown dependence, treatment effects with network interference, and testing network stylized facts.

%----------------------------------------------------------------------

\FloatBarrier
\phantomsection
\addcontentsline{toc}{section}{References}
\bibliography{dependence_robust}{} 
\bibliographystyle{aer}

\appendix
\numberwithin{equation}{section} % include section number in equation numbering
\numberwithin{table}{section}

%----------------------------------------------------------------------
\section{Proofs}\label{sproofs}
%----------------------------------------------------------------------

\begin{proof}[Proof of \autoref{mq}]
  By Theorems \ref{permgen} and \ref{quadstat} the CDFs of the test statistics converge in probability, pointwise, to those of their stated limits conditional on $\bm{X}$. By an extension of Poly\'{a}'s Theorem \citep[e.g.][Theorem 11.2.9]{lehmann2006testing}, the CDFs converge uniformly. Furthermore, the bounded convergence theorem implies unconditional convergence.
  % https://math.stackexchange.com/questions/2154799/proof-of-polyas-theorem-probability-theory/2154917#2154917
\end{proof}

%----------------
\begin{theoremSA}[Mean-Type Statistic]\label{permgen}
  For all $n$, let $R_n\geq 1$ be an integer. Suppose the following conditions hold.
  \begin{enumerate}[(a)]
    \item $R_n\rightarrow\infty$ and $R_n/n = o(1)$. \label{rates}
    \item $n^{-1/2}\sum_{i=1}^n (X_i - \mu_0) = O_p(1)$. \label{consist}
    \item There exists a positive-definite matrix $\bm{\Sigma}$ such that $\hat{\bm{\Sigma}} \plimarrow \bm{\Sigma}$. \label{sampvar}
    \item $n^{-1} \sum_{i=1}^n \norm{X_i}^{2+\delta} = O_p(1)$ for some $\delta>0$. \label{4th}
  \end{enumerate}

  \noindent Then
  \begin{equation*}
    \tilde T_M(\mu_0; \pi) \dlimarrow \mathcal{N}(\bm{0},\bm{I}_m) \quad\text{and}\quad \tilde T_M(\bar{X}; \pi) \dlimarrow \mathcal{N}(\bm{0},\bm{I}_m).\footnote{For this theorem and \autoref{quadstat}, by ``$\dlimarrow$'' we mean the CDFs of the statistics converge in probability, pointwise, to those of their stated limits conditional on $\bm{X}$.}
  \end{equation*}
\end{theoremSA}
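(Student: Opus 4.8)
The plan is to start from the decomposition \eqref{decomp}, $\tilde T_M(\mu_0;\pi)=[I]+[II]$ with $W_{M,r}=\hat{\bm\Sigma}^{-1/2}(X_{\pi_r(1)}-\mu_0)$, and to handle the two pieces separately, conditioning throughout on $\bm X$ (and, since $\hat{\bm\Sigma}\plimarrow\bm\Sigma$ with $\bm\Sigma$ positive definite forces $\hat{\bm\Sigma}$ to be invertible with probability tending to one, defining the statistic arbitrarily on the asymptotically negligible event where $\hat{\bm\Sigma}$ is singular). First I would dispose of the bias term. The first coordinate $\pi_r(1)$ of a uniform random permutation is uniform on $\{1,\dots,n\}$ and the $\pi_r$ are i.i.d., so $\E[W_{M,r}\mid\bm X]=\hat{\bm\Sigma}^{-1/2}(\bar X-\mu_0)$ and hence $[II]=\sqrt{R_n}\,\hat{\bm\Sigma}^{-1/2}(\bar X-\mu_0)=(R_n/n)^{1/2}\hat{\bm\Sigma}^{-1/2}\cdot n^{-1/2}\sum_{i=1}^n(X_i-\mu_0)$. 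Conditions (a) ($R_n/n=o(1)$), (b) ($\sqrt n$-consistency), and (c) (giving $\hat{\bm\Sigma}^{-1/2}=O_p(1)$) then yield $[II]=o_p(1)$; note $[II]$ is a function of $\bm X$ alone, hence constant given $\bm X$, and for $\tilde T_M(\bar X;\pi)$ it vanishes identically.

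The core of the argument is a central limit theorem for $[I]=R_n^{-1/2}\sum_{r=1}^{R_n}Y_{M,r}$, $Y_{M,r}=W_{M,r}-\E[W_{M,r}\mid\bm X]$, conditional on $\bm X$. Conditionally on $\bm X$ the $Y_{M,r}$ are i.i.d.\ with mean zero, and using $\hat{\bm\Sigma}=n^{-1}\sum_i(X_i-\bar X)(X_i-\bar X)'$ together with the identity $n^{-1}\sum_i(X_i-\mu_0)(X_i-\mu_0)'=\hat{\bm\Sigma}+(\bar X-\mu_0)(\bar X-\mu_0)'$ gives $\mathrm{Var}(Y_{M,r}\mid\bm X)=\hat{\bm\Sigma}^{-1/2}\hat{\bm\Sigma}\,\hat{\bm\Sigma}^{-1/2}=\bm I_m$ exactly, so $\mathrm{Var}([I]\mid\bm X)=\bm I_m$ identically (for $\tilde T_M(\bar X;\pi)$ the same identity with $\mu_0$ replaced by $\bar X$ applies); this exact standardization is the device that lets the test avoid knowing $\bm\Sigma$. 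For the Lindeberg part I would check a Lyapunov-type condition holding in probability: by a $c_r$-inequality the conditional $(2+\delta)$th absolute moment of $Y_{M,r}$ is bounded by a constant times $\norm{\hat{\bm\Sigma}^{-1/2}}^{2+\delta}\big(n^{-1}\sum_i\norm{X_i-\mu_0}^{2+\delta}+\norm{\bar X-\mu_0}^{2+\delta}\big)$, which is $O_p(1)$ by conditions (d), (b), and (c) (the bound $n^{-1}\sum_i\norm{X_i-\mu_0}^{2+\delta}=O_p(1)$ following from (d) after a routine Jensen estimate on $\norm{\bar X}$), so dividing by $R_n^{\delta/2}\to\infty$ makes the Lyapunov ratio $o_p(1)$.

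With these ingredients I would run a characteristic-function argument conditional on $\bm X$: the conditional characteristic function of $[I]$ at $t\in\R^m$ equals $\big(\E[\exp(it'Y_{M,1}/\sqrt{R_n})\mid\bm X]\big)^{R_n}$, and a second-order Taylor expansion shows the linear term drops out (mean zero), the quadratic term contributes $-t't/(2R_n)$ (because the conditional variance is $\bm I_m$), and the remainder, controlled by the $o_p(1)$ Lyapunov quantity, is $o_p(1)$ after raising to the $R_n$th power; hence the conditional characteristic function converges in probability to $e^{-t't/2}$ for every $t$. A standard subsequence argument combined with the Lévy continuity theorem upgrades this to pointwise-in-probability convergence of the conditional CDF of $[I]$ to $\Phi$, and then — $\Phi$ being continuous and each conditional CDF monotone — to uniform convergence in probability. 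Finally I would add back the shift: the conditional CDF of $\tilde T_M(\mu_0;\pi)$ is $x\mapsto F_n(x-[II])$ with $F_n$ the conditional CDF of $[I]$ and $[II]$ being $\bm X$-measurable, so $\abs{F_n(x-[II])-\Phi(x)}\le\sup_y\abs{F_n(y)-\Phi(y)}+\abs{\Phi(x-[II])-\Phi(x)}\plimarrow0$ using $[II]=o_p(1)$ and uniform continuity of $\Phi$; the case $\tilde T_M(\bar X;\pi)$ is the special case $[II]\equiv0$. I expect the main obstacle to be the bookkeeping that makes the Lyapunov and characteristic-function estimates hold \emph{in probability} over the random conditioning $\bm X$ and that passes cleanly to the ``conditional CDF converges in probability'' conclusion, rather than any single hard estimate; the computation making the conditional covariance exactly $\bm I_m$ is the one genuinely clever step and is routine once noticed.
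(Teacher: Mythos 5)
Your proposal is correct and follows essentially the same route as the paper's proof: the same decomposition into $[I]+[II]$, the bias term shown to be $O_p((R_n/n)^{1/2})=o_p(1)$, and a conditional Lindeberg--Lyapunov CLT for $[I]$ exploiting the exact identity $\var(W_{M,r}\mid\bm X)=\bm I_m$ together with the $(2+\delta)$-moment bound from assumptions (b)--(d). The only difference is that you spell out the characteristic-function/subsequence mechanics behind the ``conditional CDF converges in probability'' conclusion, which the paper leaves implicit.
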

\begin{proof}
  Recall decomposition \eqref{decomp}. Let $\tilde{X}_i = X_i - \mu_0$. By definition of $\pi_r$,
  \begin{align*}
    \E[\tilde T_M(\mu_0; \pi) \mid \bm{X}] &= \sqrt{R_n}\hat{\bm{\Sigma}}^{-1/2} \frac{1}{\abs{\Pi}} \sum_{\pi\in\Pi} \tilde{X}_{\pi(1)} = \sqrt{R_n}\hat{\bm{\Sigma}}^{-1/2} \frac{1}{n!} \sum_{\pi\in\Pi} \tilde{X}_{\pi(1)} \\
    &= \sqrt{R_n}\hat{\bm{\Sigma}}^{-1/2} \frac{1}{n!} \sum_{i=1}^n \tilde{X}_i (n-1)! = \sqrt{\frac{R_n}{n}} \hat{\bm{\Sigma}}^{-1/2} \frac{1}{\sqrt{n}} \sum_{i=1}^n \tilde{X}_i. 
  \end{align*}

  \noindent Hence, $[II]$ in \eqref{decomp} is $O_p( (R_n/n)^{1/2} )$ and therefore $o_p(1)$ by our assumptions.

  We next condition throughout on $\bm{X}$ and apply a Lindeberg CLT to $[I]$ in \eqref{decomp}, which is a sum of conditionally independent random vectors. First we show that asymptotic variance is $\bm{I}_m$. We have
  \begin{equation*}
    W_{M,r} - \E[W_{M,r} \mid \bm{X}] = \hat{\bm{\Sigma}}^{-1/2} \tilde{X}_{\pi_r(1)} - \frac{1}{n} \sum_{i=1}^n \hat{\bm{\Sigma}}^{-1/2} \tilde{X}_i = \hat{\bm{\Sigma}}^{-1/2} (X_{\pi_r(1)} - \bar{X}). 
  \end{equation*}

  \noindent Its conditional second moment is
  \begin{multline*}
    \frac{1}{\abs{\Pi}} \sum_{\pi\in\Pi} \hat{\bm{\Sigma}}^{-1/2} (X_{\pi(1)}-\bar{X}) (X_{\pi(1)} - \bar{X})' (\hat{\bm{\Sigma}}^{-1/2})' \\
    = \frac{1}{n!} \sum_{i=1}^n \hat{\bm{\Sigma}}^{-1/2}(X_i-\bar{X})(X_i-\bar{X})' (\hat{\bm{\Sigma}}^{-1/2})' (n-1)! = \bm{I}_m.
  \end{multline*}

  \noindent Since $\var( W_{M,r} \mid \bm{X} )$ is identically distributed across $r$, $\var([I] \mid \bm{X}) = \bm{I}_m$.

  Similar calculations yield, for $\delta$ in assumption \eqref{4th},
  \begin{equation*}
    \E\left[ \norm{W_{M,r} - \E[W_{M,r} \mid \bm{X}]}^{2+\delta} \mid \bm{X} \right] = \frac{1}{n} \sum_{i=1}^n \norm{\hat{\bm{\Sigma}}^{-1/2}(X_i - \mu_0) - \hat{\bm{\Sigma}}^{-1/2}(\bar{X}-\mu_0)}^{2+\delta}, 
  \end{equation*}

  \noindent where $\norm{\cdot}$ denotes the sup norm for vectors. This is $O_p(1)$ by assumptions \eqref{consist}--\eqref{4th} and Minkowski's inequality, which verifies the Lindeberg condition.
\end{proof}

%----------------
\begin{theoremSA}[U-Type Statistic]\label{quadstat}
  For all $n$, let $R_n \geq 2$ be an integer. If (a) $R_n\rightarrow\infty$, $\sqrt{R_n}/n = o(1)$, and (b)--(d) assumptions \eqref{consist}--\eqref{4th} of \autoref{permgen} hold, then
  \begin{equation*}
    T_U(\mu_0; \pi) \dlimarrow \mathcal{N}(0,1) \quad\text{and}\quad T_U(\bar{X}; \pi) \dlimarrow \mathcal{N}(0,1). 
  \end{equation*}
\end{theoremSA}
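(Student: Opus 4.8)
The plan is to follow the template of the proof of \autoref{permgen}: condition on $\bm{X}$ throughout and treat $T_U(\mu_0;\pi)$ as a normalized sum of conditionally i.i.d.\ (in $r$) quadratic-form summands. Write $\tilde{X}_i = X_i - \mu_0$ and $\xi_r = \tilde{X}_{\pi_r(1)}'\hat{\bm{\Sigma}}^{-1}\tilde{X}_{\pi_r(2)}$, so that $T_U(\mu_0;\pi) = (mR_n)^{-1/2}\sum_{r=1}^{R_n}\xi_r$; since the $\pi_r$ are i.i.d.\ uniform, conditional on $\bm{X}$ the $\xi_r$ are i.i.d., with $(\pi_r(1),\pi_r(2))$ uniform over ordered pairs of distinct indices. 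I would then split $T_U(\mu_0;\pi) = \sqrt{R_n/m}\,\E[\xi_1\mid\bm{X}] + (mR_n)^{-1/2}\sum_{r=1}^{R_n}(\xi_r - \E[\xi_r\mid\bm{X}])$ and show the first (``bias'') term is $o_p(1)$ while the centered second term obeys a conditional Lindeberg--Feller CLT with asymptotic variance $1$.

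For the bias term, averaging over the uniform permutation gives $\E[\xi_1\mid\bm{X}] = [n(n-1)]^{-1}\big( (\sum_i\tilde{X}_i)'\hat{\bm{\Sigma}}^{-1}(\sum_i\tilde{X}_i) - \sum_i \tilde{X}_i'\hat{\bm{\Sigma}}^{-1}\tilde{X}_i \big)$. By assumption \eqref{consist}, $\sum_i\tilde{X}_i = \sqrt{n}\cdot O_p(1)$, so the first term inside the parentheses is $n\cdot O_p(1)$ (using \eqref{sampvar}); by \eqref{4th} and \eqref{sampvar} the second is also $n\cdot O_p(1)$; hence $\E[\xi_1\mid\bm{X}] = O_p(1/n)$ and $\sqrt{R_n/m}\,\E[\xi_1\mid\bm{X}] = (\sqrt{R_n}/n)\,O_p(1) = o_p(1)$ by the rate condition $\sqrt{R_n}/n=o(1)$. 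For $T_U(\bar{X};\pi)$ one replaces $\tilde{X}_i$ by $X_i-\bar{X}$; then $\sum_i(X_i-\bar{X}) = 0$ and $\sum_i(X_i-\bar{X})(X_i-\bar{X})' = n\hat{\bm{\Sigma}}$ exactly, so the conditional mean collapses to the deterministic quantity $-m/(n-1)$ and the bias term is $-\sqrt{mR_n}/(n-1) = o_p(1)$.

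For the centered term I would verify the two ingredients of a conditional CLT. First, the \emph{variance}: a trace identity gives $\E[\xi_1^2\mid\bm{X}] = [n(n-1)]^{-1}\big( \operatorname{tr}(\hat{\bm{\Sigma}}^{-1}M\hat{\bm{\Sigma}}^{-1}M) - \sum_i(\tilde{X}_i'\hat{\bm{\Sigma}}^{-1}\tilde{X}_i)^2 \big)$ with $M = \sum_i\tilde{X}_i\tilde{X}_i'$. Since $n^{-1}M = \hat{\bm{\Sigma}} + (\bar{X}-\mu_0)(\bar{X}-\mu_0)' \plimarrow \bm{\Sigma}$ by \eqref{consist}--\eqref{sampvar}, and $\hat{\bm{\Sigma}}\plimarrow\bm{\Sigma}$ positive definite, the first term equals $n^2\operatorname{tr}(\bm{\Sigma}^{-1}\bm{\Sigma}\bm{\Sigma}^{-1}\bm{\Sigma})(1+o_p(1)) = n^2 m(1+o_p(1))$; the diagonal correction is $o_p(n^2)$ because $\max_i\norm{\tilde{X}_i} = O_p(n^{1/(2+\delta)})$ by \eqref{4th}, whence by interpolation $\sum_i\norm{\tilde{X}_i}^4 = O_p(n^{4/(2+\delta)} \vee n) = o_p(n^2)$ and $\sum_i(\tilde{X}_i'\hat{\bm{\Sigma}}^{-1}\tilde{X}_i)^2 \le O_p(1)\sum_i\norm{\tilde{X}_i}^4$. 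Hence $\E[\xi_1^2\mid\bm{X}] = m+o_p(1)$, and since $(\E[\xi_1\mid\bm{X}])^2 = o_p(1)$, the conditional variance of the centered term equals $m^{-1}\var(\xi_1\mid\bm{X}) = 1+o_p(1)$. Second, a \emph{Lyapunov condition} with exponent $2+\delta/2$: since $\abs{\xi_1}\le O_p(1)\,\norm{\tilde{X}_{\pi_1(1)}}\norm{\tilde{X}_{\pi_1(2)}}$ and, by the power-mean inequality applied to the empirical measure, $n^{-1}\sum_i\norm{\tilde{X}_i}^{2+\delta/2} = O_p(1)$, one gets $\E[\abs{\xi_1}^{2+\delta/2}\mid\bm{X}] = O_p(1)$, so $R_n\,\E[\,\abs{(mR_n)^{-1/2}(\xi_1-\E[\xi_1\mid\bm{X}])}^{\,2+\delta/2}\mid\bm{X}] = O_p(R_n^{-\delta/4}) = o_p(1)$.

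Combining these two facts with the Lindeberg--Feller CLT, applied conditionally on $\bm{X}$ (upgrading the $o_p$ statements to pointwise-in-probability convergence of the conditional CDFs by a routine subsequence argument, exactly as underlies \autoref{permgen} and the proof of \autoref{mq}), shows the conditional CDF of the centered term converges in probability, pointwise, to $\Phi$; adding back the $o_p(1)$ bias term yields $T_U(\mu_0;\pi)\dlimarrow\mathcal{N}(0,1)$ in the stated conditional sense, and the identical argument with $X_i-\bar{X}$ in place of $\tilde{X}_i$ (where, in fact, the bias is exactly $-\sqrt{mR_n}/(n-1)$ and $M=n\hat{\bm{\Sigma}}$ exactly) handles $T_U(\bar{X};\pi)$. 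I expect the main obstacle to be controlling the ``diagonal'' ($i=j$) corrections to the conditional mean and second moment of $\xi_1$ under only a $2+\delta$ moment rather than a fourth moment: this is where the $\max_i\norm{\tilde{X}_i}$ bound and the interpolation inequality are essential, and it is also what pins down the admissible exponent in the Lyapunov step.
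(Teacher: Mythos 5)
Your proof is correct and follows essentially the same route as the paper: the same conditional-on-$\bm{X}$ decomposition into a bias term (shown to be $O_p(\sqrt{R_n}/n)=o_p(1)$ using $\sqrt{n}$-consistency) plus a centered sum of conditionally i.i.d.\ quadratic forms handled by a Lindeberg--Lyapunov CLT with conditional variance tending to one. The only differences are cosmetic: you treat $T_U(\bar{X};\pi)$ by direct computation (exact bias $-\sqrt{mR_n}/(n-1)$ and $M=n\hat{\bm{\Sigma}}$) rather than via the paper's Step 1 equivalence $T_U(\bar{X};\pi)=T_U(\mu_0;\pi)+o_p(1)$, and you make explicit the diagonal-term/interpolation bound for the variance (and a $2+\delta/2$ Lyapunov exponent, where $2+\delta$ also works by the product structure) that the paper leaves implicit.
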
 
\begin{proof}
  {\bf Step 1.} We first show that $T_U(\bar{X}; \pi) = T_U(\mu_0; \pi) + o_p(1)$. We have
  \begin{multline*}
    T_U(\bar{X}; \pi) = T_U(\mu_0; \pi) - (\bar{X}-\mu_0)'\hat{\bm{\Sigma}}^{-1} \frac{1}{\sqrt{mR_n}} \sum_{r=1}^{R_n} \left( (X_{\pi_r(1)}-\mu_0) + (X_{\pi_r(2)}-\mu_0) \right) \\ + \frac{\sqrt{R_n/m}}{n} \sqrt{n}(\bar{X}-\mu_0)'\hat{\bm{\Sigma}}^{-1} \sqrt{n}(\bar{X}-\mu_0). 
  \end{multline*}

  \noindent By assumption \eqref{consist}, the third term on the right-hand side is $o_p(1)$. The second term on the right-hand equals $-1$ times the sum of two similar terms, one of which is 
  \begin{equation*}
    (\bar{X}-\mu_0)'\hat{\bm{\Sigma}}^{-1} \underbrace{\frac{1}{\sqrt{mR_n}} \sum_{r=1}^{R_n} (X_{\pi_r(1)}-\mu_0)}_{A_n}. 
  \end{equation*}

  \noindent As shown in the proof of \autoref{permgen}, $A_n = O_p(1)$. Thus, the previous expression is $o_p(1)$ by assumption \eqref{consist}.

  \bigskip

  \noindent {\bf Step 2.} Decompose
  \begin{equation}
    T_U(\mu_0; \pi) = (T_U(\mu_0; \pi) - \E[T_U(\mu_0; \pi) \mid \bm{X}]) + \E[T_U(\mu_0; \pi) \mid \bm{X}]. \label{decompose}
  \end{equation}

  \noindent We show that $\E[T_U(\mu_0; \pi) \mid \bm{X}] \plimarrow 0$:
  \begin{align*}
    \E[T_U(\mu_0; \pi) \mid \bm{X}] &= \frac{1}{\sqrt{mR_n}} \sum_{r=1}^{R_n} \E\left[ (X_{\pi_r(1)}-\mu_0)' \hat{\bm{\Sigma}}^{-1} (X_{\pi_r(2)}-\mu_0) \,\big|\, \bm{X} \right] \\
				    &= \sqrt{\frac{R_n}{m}} \frac{1}{\abs{\Pi}} \sum_{\pi\in\Pi} (X_{\pi(1)}-\mu_0)' \hat{\bm{\Sigma}}^{-1} (X_{\pi(2)}-\mu_0) \\
    &= \sqrt{\frac{R_n}{m}} \frac{1}{n(n-1)} \sum_{i=1}^n \sum_{j\neq i} (X_i-\mu_0)' \hat{\bm{\Sigma}}^{-1} (X_j-\mu_0).
  \end{align*}

  \noindent From the last line, add and subtract
  \begin{equation*}
    \sqrt{\frac{R_n}{m}} \frac{1}{n(n-1)} \sum_{i=1}^n (X_i-\mu_0)' \hat{\bm{\Sigma}}^{-1} (X_i-\mu_0) 
  \end{equation*}

  \noindent to obtain
  \begin{equation*}
    \frac{\sqrt{R_n/m}}{n-1} \left( \sqrt{n}(\bar{X}-\mu_0)' \hat{\bm{\Sigma}}^{-1} \sqrt{n}(\bar{X}-\mu_0) - \frac{1}{n} \sum_{i=1}^n (X_i-\mu_0)' \hat{\bm{\Sigma}}^{-1} (X_i-\mu_0) \right). 
  \end{equation*}

  \noindent This is $o_p(1)$ by assumptions (a)--(c).

  \bigskip

  \noindent {\bf Step 3.} It remains to establish a normal limit for the term $T_U(\mu_0; \pi) - \E[T_U(\mu_0; \pi) \mid \bm{X}]$ in decomposition \eqref{decompose}. We condition on the data, treating it as fixed, and apply a Lindeberg CLT, noting $T_U(\mu_0; \pi)$ is an average of conditionally independent variables
  \begin{equation*}
    W_{U,r} = m^{-1/2} (X_{\pi_r(1)}-\mu_0)' \hat{\bm{\Sigma}}^{-1} (X_{\pi_r(2)}-\mu_0).
  \end{equation*}
  
  First consider the variance. We have $\var(T_U(\mu_0; \pi) \mid \bm{X}) = \E[W_{U,r}^2 \mid \bm{X}] - \E[W_{U,r} \mid \bm{X}]^2$, where the second term on the right-hand side is $o_p(1)$ by step 2 and 
  \begin{align*}
    \E[W_{U,r}^2 \mid \bm{X}] &= \frac{1}{\abs{\Pi}} \sum_{\pi\in\Pi} m^{-1} \big( (X_{\pi_r(1)}-\mu_0)' \hat{\bm{\Sigma}}^{-1} (X_{\pi_r(2)}-\mu_0) \big)^2 \\
    &= \frac{1}{n(n-1)} \sum_{i=1}^n \sum_{j\neq i} m^{-1} \big( (X_i-\mu_0)' \hat{\bm{\Sigma}}^{-1} (X_j-\mu_0) \big)^2 \\
    &= \frac{1}{n} \sum_{i=1}^n m^{-1} (X_i-\mu_0)' \hat{\bm{\Sigma}}^{-1} \left( \frac{1}{n-1} \sum_{j\neq i} (X_j-\mu_0) (X_j-\mu_0)'\right) \hat{\bm{\Sigma}}^{-1} (X_i-\mu_0), 
  \end{align*}

  \noindent which converges in probability to one, as desired.
  % its expectation equal the expectation of its trace, and the trace allows you to shuffle the matrices around to end up with trace(I_m) = m
  
  Finally, we show that, for $\delta$ in assumption \eqref{4th},
  \begin{equation*}
    \E\left[\abs{W_{U,r}-\E[W_{U,r} \mid \bm{X}]}^{2+\delta} \,\bigg|\, \bm{X}\right] = O_p(1). 
  \end{equation*}

  \noindent This is enough to verify the Lindeberg condition since $W_{U,r}$ is identically distributed across $r$. The left-hand side of the previous equation is equal to a constant times
  \begin{multline*}
    \frac{1}{n(n-1)} \sum_{i=1}^n \sum_{j\neq i} \abs{(X_i-\mu_0)'\hat{\bm{\Sigma}}^{-1}(X_j-\mu_0) - \E[W_{U,r} \mid \bm{X}]}^{2+\delta} \\ \leq \left( \left( \frac{1}{n(n-1)} \sum_{i=1}^n \sum_{j\neq i} \abs{(X_i-\mu_0)'\hat{\bm{\Sigma}}^{-1}(X_j-\mu_0)}^{2+\delta} \right)^{1/(2+\delta)} + \abs{\E[W_{U,r} \mid \bm{X}]} \right)^{2+\delta}
  \end{multline*}

  \noindent by Minkowski's inequality. This is $O_p(1)$ since
  \begin{equation*}
    \frac{1}{n(n-1)} \sum_{i=1}^n \sum_{j\neq i} \abs{(X_i-\mu_0)'\hat{\bm{\Sigma}}^{-1}(X_j-\mu_0)}^{2+\delta} = O_p(1)
  \end{equation*}

  \noindent by assumptions \eqref{sampvar} and \eqref{4th}.
\end{proof}

%----------------
\vspace{15pt}
\begin{proof}[Proof of \autoref{RSMIreal}]
  We prove $\sup_{\prob\in\mathcal{P}_0} \E_\prob[\phi_n] \rightarrow \alpha' \leq \alpha$ by contradiction. Suppose not. Then we can find some null sequence $\{\prob_n\}_{n\in\mathbb{N}} \subseteq \mathcal{P}_0$ such that $\liminf_{n\rightarrow\infty} \E_{\prob_n}[\phi_n] > \alpha$. This contradicts conclusion (a) of \autoref{RSMI}. 
  
  Since $\mathcal{P}_0$ includes a DGP $\prob$ under which $\E_\prob[X_1] \leq 0$, setting $\prob_n = \prob$ for all $n$ yields $\E_\prob[\phi_n] \rightarrow \alpha$ by conclusion (a) of \autoref{RSMI}. Hence, $\alpha'=\alpha$.
\end{proof}

%----------------
\vspace{15pt}
\begin{proof}[Proof of \autoref{RSMI}]
  We first establish the asymptotic behavior of $\hat\lambda_k$. Note that 
  \begin{align}
    m^{1/2} \hat\lambda_k =& (\bar{X}_k-\mu_{0k}(\prob_n)) \hat{\bm{\Sigma}}_{kk}^{-1} \frac{1}{\sqrt{R_n}} \sum_{r=1}^{R_n} \left( (X_{\pi_r(1),k}-\mu_{0k}(\prob_n)) + (X_{\pi_r(2),k}-\mu_{0k}(\prob_n)) \right) \nonumber\\
		   &- \sqrt{R_n} \hat{\bm{\Sigma}}_{kk}^{-1} (\bar{X}_k-\mu_{0k}(\prob_n))^2 \nonumber\\
		   &+ \mu_{0k}(\prob_n) \hat{\bm{\Sigma}}_{kk}^{-1} \frac{1}{\sqrt{R_n}} \sum_{r=1}^{R_n} \left( (X_{\pi_r(1),k}-\mu_{0k}(\prob_n)) + (X_{\pi_r(2),k}-\mu_{0k}(\prob_n)) \right) \nonumber\\
		   &+ 2 \bar{X}_k \hat{\bm{\Sigma}}_{kk}^{-1} \sqrt{R_n} \mu_{0k}(\prob_n) + \sqrt{R_n} \hat{\bm{\Sigma}}_{kk}^{-1} (\mu_{0k}(\prob_n)^2 - 2\bar{X}_k\mu_{0k}(\prob_n)). \label{r0u923jw}
  \end{align}

  \noindent As shown in the proof of \autoref{permgen}, 
  \begin{equation*}
    \frac{1}{\sqrt{R_n}} \sum_{r=1}^{R_n} \left( (X_{\pi_r(1)}-\mu_0(\prob_n)) + (X_{\pi_r(2)}-\mu_0(\prob_n)) \right) = O_{\prob_n}(1).
  \end{equation*}
  
  \noindent Since $\hat{\bm{\Sigma}}_{kk}$ is asymptotically bounded away from zero and infinity under our assumptions, the first and second lines on the right-hand side of \eqref{r0u923jw} are $o_{\prob_n}(1)$; the third line is $\mu_{0k}(\prob_n) \cdot O_{\prob_n}(1)$; and the last line equals $\sqrt{R_n} \hat{\bm{\Sigma}}_{kk}^{-1} \mu_{0k}(\prob_n)^2$. Then
  \begin{equation}
    \hat\lambda_k = \sqrt{\frac{R_n}{m}} \hat{\bm{\Sigma}}_{kk}^{-1} \mu_{0k}(\prob_n)^2 + \mu_{0k}(\prob_n) \cdot O_{\prob_n}(1) + o_{\prob_n}(1). \label{inputsplit}
  \end{equation}

  \noindent We also note for later that
  \begin{equation}
    R_n^{1/4}\bar{X}_k = R_n^{1/4}\mu_{0k}(\prob_n) + \sqrt{\frac{R_n^{1/2}}{n}} \sqrt{n}(\bar{X}_k - \mu_{0k}(\prob_n)) = R_n^{1/4}\mu_{0k}(\prob_n) + o_{\prob_n}(1). \label{RXfact}
  \end{equation}

  \noindent Now we turn to each of the claims (a)--(c) of the lemma.

  \bigskip

  \noindent {\bf Claim (a).} Recall that $\delta_k^* = \lim_{n\rightarrow\infty} R_n^{1/4} \mu_{0k}(\prob_n)$. Suppose $\max_k \delta_k^* \leq 0$. Without loss of generality, assume $\delta_k^* = 0$ for all $k=1, \dots, \ell$, and $\delta_k^* < 0$ for all $k=\ell+1, \dots, m$. Then for any $k=1, \dots, \ell$,
  \begin{equation}
    \abs{\hat\lambda_k \ind\{R_n^{1/4}\bar{X}_k \geq 0\}} \leq \abs{\hat\lambda_k} \plimarrow 0 \label{fgw903j4}
  \end{equation}

  \noindent by \eqref{inputsplit}. For any $k=\ell+1, \dots, m$, we have $\hat\lambda_k \ind\{R_n^{1/4}\bar{X}_k \geq 0\} \plimarrow 0$ because, for any $\epsilon > 0$, by the law of total probability,
  \begin{multline*}
    \prob_n\left( \abs{\hat\lambda_k \ind\{R_n^{1/4}\bar{X}_k \geq 0\}} > \epsilon \right) \leq \underbrace{\prob_n\left( \abs{\hat\lambda_k \ind\{R_n^{1/4}\bar{X}_k \geq 0\}} > \epsilon \medcap R_n^{1/4}\bar{X}_k < 0 \right)}_0 \\ + \prob_n(R_n^{1/4}\bar{X}_k \geq 0) \rightarrow 0
  \end{multline*}

  \noindent by \eqref{RXfact} since $\delta_k^* < 0$. Therefore, by \eqref{algebra} and the requirement $R_n>0$,
  \begin{equation}
    \prob_n(Q_n(\pi) > c_{1-\alpha}) = \prob_n\left( \max_{1\leq k\leq m} \left\{ T_{U,k}(\bar{X}_k;\pi) + o_{\prob_n}(1) \right\} > c_{1-\alpha} \right). \label{hof09hb3n}
  \end{equation}

  \noindent By definition, $c_{1-\alpha}$ is the $(1-\alpha)$-quantile of $\tilde Q_n(\pi)$. Moreover, $\tilde Q_n(\pi)$ is a continuous function of $(T_{U,k}(\bar{X}_k;\pi))_{k=1}^m$, whose CDF converges uniformly to that of a normal random vector by a minor extension of \autoref{quadstat}. Hence,
  \begin{equation*}
    \eqref{hof09hb3n} = \prob_n(\tilde Q_n(\pi) + o_{\prob_n}(1) > c_{1-\alpha}) \rightarrow \alpha.
  \end{equation*}
  
  \bigskip
  
  \noindent {\bf Claim (b).} Suppose $\max_k \delta_k^* = \infty$. Then for some $k = 1,\dots, m$, $\hat\lambda_k\ind\{\bar{X}_k\geq 0\} \plimarrow \infty$ by \eqref{inputsplit}. Since $(T_{U,k}(\bar{X}_k;\pi))_{k=1}^m$ has a tight limit distribution, as established in claim (a), we have $Q_n(\pi) \plimarrow \infty$ by \eqref{algebra}. On the other hand, $c_{1-\alpha}$ converges in probability to a positive constant. Hence, the rejection probability tends to one.

  \bigskip
  
  \noindent {\bf Claim (c).} Suppose $\max_k \delta_k^* \in (0,\infty)$. Without loss of generality suppose that $\delta_k^*$ is finite and strictly positive for $k=1, \dots, \ell$ and non-positive for $k=\ell+1, \dots, m$. Then for $k=1, \dots, \ell$, by \eqref{inputsplit} and \eqref{RXfact},
  \begin{equation*}
    \hat\lambda_k \ind\{R_n^{1/4}\bar{X}_k \geq 0\} \plimarrow \lambda_k^* \equiv \frac{(\delta_k^*)^2 \bm{\Sigma}_{kk}^{-1}}{\sqrt{m}} \in (0,\infty),
  \end{equation*}
  
  \noindent where $\bm{\Sigma}_{kk}$ is the $k$th diagonal of $\bm{\Sigma}$. For $k=\ell+1, \dots, m$, $\hat\lambda_k \ind\{R_n^{1/4}\bar{X}_k \geq 0\} \plimarrow 0$, as shown in claim (a). Therefore, by \eqref{algebra},
  \begin{equation}
    Q_n(\pi) = \max\left\{ \max_{1\leq k\leq\ell} \left\{ T_{U,k}(\bar{X}_k;\pi) + \lambda_k^* + o_{\prob_n}(1) \right\}, \max_{\ell < k \leq m} \left\{ T_{U,k}(\bar{X}_k;\pi) + o_{\prob_n}(1) \right\} \right\}. \label{fwj2}
  \end{equation}

  \noindent On the other hand, $c_{1-\alpha}$ is the $(1-\alpha)$-quantile of the distribution of
  \begin{equation*}
    \tilde Q_n(\pi) = \max\left\{ \max_{1\leq k\leq\ell} T_{U,k}(\bar{X}_k;\pi), \max_{\ell < k \leq m} T_{U,k}(\bar{X}_k;\pi) \right\}. 
  \end{equation*}

  \noindent As discussed above in claim (a), both have tight limit distributions obtained by replacing $(T_{U,k}(\bar{X}_k;\pi))_{k=1}^m$ with a normal random vector. Since the $\lambda_k^*$s in \eqref{fwj2} are strictly positive, $\prob_n(Q_n(\pi) > c_{1-\alpha})$ converges to some $\beta > \alpha$.
\end{proof}

%----------------------------------------------------------------------
\section{Additional Tables}
%----------------------------------------------------------------------

This section contains simulation results referenced in \autoref{ssims}.

\begin{table}[ht]
\centering
\caption{Average Clustering}
\resizebox{\columnwidth}{!}{
\begin{threeparttable}
\begin{tabular}{llrrrrr|rrrrr}
\toprule
$n$   &       & \multicolumn{5}{c}{Mean-Type Test} & \multicolumn{5}{c}{U-Type Test} \\
\cmidrule{2-12}
    & Size &           5.35 &  6.67 &  7.58 &  7.50 &  8.93 &        8.88 &    9.75 &   10.08 &   10.47 &   10.57 \\
100 & Power &          13.80 & 17.90 & 20.63 & 23.38 & 26.05 &       32.40 &   35.75 &   38.67 &   39.85 &   41.68 \\
   & $R_n$ &           6 &  8 & 10 & 12 & 14 &      110 &  147 &  184 &  221 &  258 \\
\cmidrule{2-12}
 & Size &           4.77 &  5.57 &  5.62 &  6.10 &  6.12 &        6.77 &    6.90 &    7.03 &    7.57 &    7.68 \\
500 & Power &          19.45 & 26.03 & 30.42 & 33.42 & 39.32 &       69.75 &   74.87 &   79 &   81.98 &   85.08 \\
   & $R_n$ &          13 & 18 & 22 & 26 & 31 &      944 & 1259 & 1574 & 1889 & 2204 \\
\cmidrule{2-12}
 & Size &           4.93 &  5.32 &  5.92 &  5.30 &  6.00 &        5.85 &    6.02 &    6.65 &    6.10 &    6.32 \\
1k & Power &          25.73 & 32.97 & 38.12 & 44.50 & 47.92 &       91.87 &   95.02 &   96.55 &   97.48 &   98.12 \\
   & $R_n$ &          19 & 25 & 31 & 37 & 43 &     2381 & 3174 & 3968 & 4762 & 5555 \\
\bottomrule
\end{tabular}
\label{tnetstatscc}
\begin{tablenotes}[para,flushleft]
  \footnotesize Averages over 6000 simulations. ``Size'' rows obtained from testing $H_0\colon \theta_0=\theta^*$, where $\theta^*=$ true expected value of the average clustering. ``Power'' rows obtained from testing $H_0\colon \theta_0=\theta^*+0.08$.
\end{tablenotes}
\end{threeparttable}}
\end{table}

\begin{table}[ht]
\centering
\caption{Average Degree}
\resizebox{\columnwidth}{!}{
\begin{threeparttable}
\begin{tabular}{llrrrrr|rrrrr}
\toprule
$n$   &       & \multicolumn{5}{c}{Mean-Type Test} & \multicolumn{5}{c}{U-Type Test} \\
\cmidrule{2-12}
 & Size &           7.22 &  7.43 &  8.18 &  9.10 &  9.77 &        9.20 &   10.22 &   10.83 &   10.88 &   12.07 \\
100 & Power &          26.58 & 31.98 & 36.72 & 41.08 & 45.93 &       62.28 &   67.07 &   69.87 &   72.50 &   74.45 \\
   & $R_n$ &           6 &  8 & 10 & 12 & 14 &      110 &  147 &  184 &  221 &  258 \\
\cmidrule{2-12}
 & Size &           5.77 &  5.97 &  6.07 &  6.55 &  7.25 &        6.68 &    7.18 &    8.08 &    8.30 &    8.23 \\
500 & Power &          36.62 & 49.27 & 55.10 & 60.72 & 67.95 &       96.88 &   98.43 &   98.98 &   99.37 &   99.25 \\
   & $R_n$ &          13 & 18 & 22 & 26 & 31 &      944 & 1259 & 1574 & 1889 & 2204 \\
\cmidrule{2-12}
 & Size &           5.65 &  5.92 &  5.40 &  5.85 &  6.07 &        6.05 &    6.43 &    6.82 &    6.63 &    6.75 \\
1k & Power &          49.17 & 59.52 & 68.52 & 75.10 & 80.45 &       99.93 &   99.97 &   99.98 &   99.98 &  100 \\
   & $R_n$ &          19 & 25 & 31 & 37 & 43 &     2381 & 3174 & 3968 & 4762 & 5555 \\
\bottomrule
\end{tabular}
\label{tnetstatsdeg}
\begin{tablenotes}[para,flushleft]
  \footnotesize Averages over 6000 simulations. ``Size'' rows obtained from testing $H_0\colon \theta_0=\theta^*$, where $\theta^*=$ true expected value of the average degree. ``Power'' rows obtained from testing $H_0\colon \theta_0=\theta^*+0.8$.
\end{tablenotes}
\end{threeparttable}}
\end{table}

\begin{table}[ht]
\centering
\caption{Treatment Spillovers}
\begin{threeparttable}
\resizebox{\columnwidth}{!}{
\begin{tabular}{llrrrrr|rrrrr|rrrrr}
\toprule
  & $n$      & \multicolumn{5}{c}{100} & \multicolumn{5}{c}{500} & \multicolumn{5}{c}{1k} \\
\cmidrule{2-17}
  & Size &   5.55 &   5.95 &   5.83 &   5.87 &   5.88 &   4.98 &    4.33 &    5.68 &    5.20 &    5.27 &    5.05 &    4.95 &    4.93 &    4.67 &    4.65 \\
M & Power &   8.08 &  10.12 &  11.62 &  13.57 &  15.85 &  14.85 &   19.52 &   22.12 &   25.93 &   30.10 &   19.93 &   25.43 &   31.40 &   35.03 &   41.98 \\
  & $R_n$ &   6 &   8 &  10 &  12 &  14 &  13 &   18 &   22 &   26 &   31 &   19 &   25 &   31 &   37 &   43 \\
\cmidrule{2-17}
  & Size &   4.07 &   4.47 &   4.02 &   3.63 &   3.68 &   4.58 &    5.03 &    4.23 &    4.73 &    4.48 &    5.28 &    4.90 &    4.48 &    4.62 &    4.63 \\
U & Power &  16.98 &  18.88 &  20.98 &  23.45 &  25.23 &  60.25 &   68.15 &   73.77 &   78.82 &   83.28 &   88.72 &   93.77 &   96.63 &   98.35 &   99.02 \\
  & $R_n$ & 110 & 147 & 184 & 221 & 258 & 944 & 1259 & 1574 & 1889 & 2204 & 2381 & 3174 & 3968 & 4762 & 5555 \\
\bottomrule
\end{tabular}}
\label{tspill}
\begin{tablenotes}[para,flushleft]
  \footnotesize Averages over 6000 simulations. $M =$ mean-type statistic, $U =$ U-type statistic.
\end{tablenotes}
\end{threeparttable}
\end{table}

\begin{table}[ht]
\centering
\caption{Power Law Test}
\begin{threeparttable}
\begin{tabular}{lrrrrrr}
\toprule
{} & \multicolumn{3}{c}{Exponential} & \multicolumn{3}{c}{Power Law} \\
\cmidrule{2-7}
Favor Exp &       56.93 &   97.07 &    99.93 &      0.00 &    0.00 &    0.00 \\
Favor PL  &        0.00 &    0.00 &     0.00 &     60.47 &   86.30 &   93.83 \\
LL        &     -370.48 & -786.58 & -1103.83 &    301.72 &  432.20 &  498.63 \\
$R_n$     &      184 & 1574 &  3968 &    184 & 1574 & 3968 \\
$n$ &        100  &    500  &     1000 &      100  &    500  &    1000 \\
\bottomrule
\end{tabular}
\label{tPL}
\begin{tablenotes}[para,flushleft]
  \footnotesize Averages over 6000 simulations. ``LL'' $=$ average normalized log-likelihood ratio. ``Favor Exp'' $=$ \% of rejections in favor of exponential.
\end{tablenotes}
\end{threeparttable}
\end{table}

%----------------------------------------------------------------------

\end{document}